\newcommand{\fenhg}{FEN-hedonic game}
\algrenewcommand\algorithmicrequire{\textbf{Requires:}}
	\newtheorem{theorem}{Theorem}
	\newtheorem{lemma}[theorem]{Lemma}
	\newtheorem{definition}[theorem]{Definition}
\newtheorem{observation}[theorem]{Observation}
\newcommand{\reva}[1]{#1}
\newcommand{\revb}[1]{#1}
\newcommand{\revc}[1]{#1}
\newcommand{\revm}[1]{#1}
\newcommand{\revnew}[1]{#1}
\begin{document}
\opt{conference}{
	\title{Testing In\-di\-vi\-du\-al-Based Stability Properties\\ in Graphical Hedonic Games}
}
\opt{full}{
	\title{Testing Stability Properties in Graphical Hedonic Games}
}
\opt{aamas}{
 	\author{Hendrik Fichtenberger}
 	\authornote{This author acknowledges the support by ERC grant No. 307696.}
 	\orcid{https://orcid.org/0000-0003-3246-5323}
 	\affiliation{TU Dortmund, Germany}
 	\email{hendrik.fichtenberger@tu-dortmund.de}
 	\author{Amer Krivošija}
 	\affiliation{TU Dortmund, Germany}
 	\email{amer.krivosija@tu-dortmund.de}
 	\author{Anja Rey}
 	\affiliation{TU Dortmund, Germany}
 	\email{anja.rey@tu-dortmund.de}
}
\opt{arxiv}{
	\opt{conference}{
		\author{Hendrik Fichtenberger\thanks{This author acknowledges the support by ERC grant No. 307696.} }
		\affil{TU Dortmund, Germany\\ \href{mailto:hendrik.fichtenberger@tu-dortmund.de}{hendrik.fichtenberger@tu-dortmund.de}\\ \href{https://orcid.org/0000-0003-3246-5323}{https://orcid.org/0000-0003-3246-5323}}
		\author{Amer Krivošija}
		\affil{TU Dortmund, Germany\\
		\href{mailto:amer.krivosija@tu-dortmund.de}{amer.krivosija@tu-dortmund.de}}
		\author{Anja Rey}
		\affil{TU Dortmund, Germany\\ \href{mailto:anja.rey@tu-dortmund.de}{anja.rey@tu-dortmund.de} \href{https://orcid.org/0000-0001-7387-6853}{https://orcid.org/0000-0001-7387-6853}}
	}
	\opt{full}{
		\author{Hendrik Fichtenberger}
		\affil{University of Vienna, Austria\\ \href{mailto:hendrik.fichtenberger@univie.ac.at}{hendrik.fichtenberger@univie.ac.at}\\ \href{https://orcid.org/0000-0003-3246-5323}{https://orcid.org/0000-0003-3246-5323}}
		\author{Anja Rey}
		\affil{University of Cologne, Germany\\ \href{mailto:rey@cs.uni-koeln.de}{rey@cs.uni-koeln.de}\\ \href{https://orcid.org/0000-0001-7387-6853}{https://orcid.org/0000-0001-7387-6853}}
	}
	\date{}
}
\opt{jaamas}{

	\author{Hendrik Fichtenberger \and
		Anja Rey %
	}

	\institute{
          H. Fichtenberger \at
          University of Vienna, Austria.
              \email{hendrik.fichtenberger@univie.ac.at}
		  \and
		  A. Rey \at
		      University of Cologne, Germany.
		      \email{rey@cs.uni-koeln.de}
	}

	\date{Received: date / Accepted: date}

	\maketitle
	
}

\opt{arxiv}{
	\maketitle
}

\begin{abstract}
In hedonic games, players form coalitions based on individual preferences over the group of players they could belong to. Several concepts to describe the stability of coalition structures in a game have been proposed and analysed in the literature. However, prior research focuses on algorithms with time complexity that is at least linear in the input size. In the light of very large games that arise from, e.g., social networks and advertising, we initiate the study of sublinear time property testing algorithms for existence and verification problems under several notions of coalition stability in a model of hedonic games represented by graphs with bounded degree. In graph property testing, one shall decide whether a given input has a property (e.g., a game admits a stable coalition structure) or is far from it, i.e., one has to modify at least an $\epsilon$-fraction of the input (e.g., the game's preferences) to make it have the property. In particular, we consider verification of perfection, individual rationality, Nash stability,
\opt{conference}{%
  and (contractual) individual stability.
}%
\opt{full}{%
  (contractual) individual stability, and core stability.
}%
While there is always a Nash-stable coalition structure (which also implies individually stable coalitions), we show that the existence of a perfect coalition structure \revc{is not tautological but} can be tested. All our testers have one-sided error and time complexity that is independent of the input size.
\opt{jaamas}{%
	\footnote{\revm{A preliminary version of this work was published in the proceedings of AAMAS~2019 (International Conference on Autonomous Agents and Multiagent Systems).}}
}
\end{abstract}

\opt{aamas}{
	\keywords{cooperative games; sublinear algorithms; hedonic games; stability; property testing}
	\maketitle
}

\section{Introduction}

Hedonic games are a form of coalition formation games, in which players form teams in a decentralized manner based on individual preferences over coalitions, i.e., subsets of players. The solution to such a game is a coalition structure, i.e., a partition of the set of players.
The main idea of hedonic games is that the players' evaluation of a coalition structure only depends on their own coalitions and not on how other players work together~\cite{dre-gre:j:hedonic-coalitions}.
These games have been formalized by \citet{ban-kon-soen:j:core-simple-coalition-formation-game} and \citet{bog-jac:j:stability-hedonic-coalition-structures}, independently.
In order to evaluate the quality of a coalition structure, several solution concepts have been considered. These include, e.g., Nash stability, which states that no individual player wants to deviate from the coalition structure, and core stability, which requires that no group of players wants to deviate and form a coalition of their own.

Algorithmically, a key issue is to find suitable representations of hedonic games: Since the number of possible coalitions for a player is exponential in the number of players, there is a trade-off between compactness and expressivity of the preference profile.
In the areas of \emph{Cooperative Games in Multiagent Systems}~(see~\citep{cha-elk-woo:b:ccgt} for details) and \emph{Computational Social Choice}~(see~\citep{hg-handbook} for details), a number of representations and stability concepts are analysed with respect to the computational complexity of deciding whether there exists a stable solution, verifying whether a given solution is stable, and finding a stable solution.
Even for restricted representations such as additively separable games, these questions are often intractable.
For instance, it is often $\mathrm{NP}$-complete to decide whether a given game allows a Nash-stable coalition structure, see~\citet{pet:graphical-hg-bounded-tw:16} for an example.
The existence of core-stability is often even $\Sigma_2^p$-complete to decide; examples can be found in~\citet{woe:ashg-sigma2p-hardness-core:13} and~\citet{ota-etal:hg-fen:17}.
This strikes even harder when the considered game instances are very large because they arise from, e.g., social networks or the assignment of advertisements to available slots on web pages so that competing ads do not interfere \revc{(this can be configured in real advertisement services \cite{googleads1,googleads2})}.
Here, it might already be impractical to read the whole input once because the data does not fit into memory or the access is slow or restricted.

In this \opt{conference}{paper}\opt{full}{article}, we study sublinear algorithms for hedonic games.
We aim to decide in sublinear time whether a game has a stable coalition structure or is far from this with respect to the number of required changes of preferences such that it admits a stable coalition structure, as well as whether a game is stable under a given coalition structure $\Gamma$ or is far from being stable under $\Gamma$.
When a coalition can be stabilized by only few compromises on the preferences, it may be acceptable to sustain the situation.
When, however, too many modifications are required to obtain any stable situation, the current situation is too far off.

Graph representations provide a compact means to encode structural connections between players. 
A formal study of graphical hedonic games is provided by \citet{pet:graphical-hg-bounded-tw:16}.
A popular variant is to encode a game as a network where players correspond to vertices and edges illustrate friendship relations. Players that are not friends are often referred to as enemies.
Preferences are extended either by prioritising appreciation of friends or aversion to enemies~\cite{dim-etal:hg-fe:06}.
However, if the game is very large, many players may not be involved in any relationship.
In this scenario, it is natural to consider a more general model.
For each player, the set of other players is divided into three subsets: friends, enemies and neutral players~\cite{ota-etal:hg-fen:17}, which is what we call the FEN-encoding.
Furthermore, we bound the number of friends and enemies per player by a constant.
\revc{For example, if the players are humans, it is known that each player can only maintain a limited number of stable relationships to other humans~\cite{DUNBAR1992469,dunbar1998grooming}. This number is known as \emph{Dunbar's number}, and empirical estimates are smaller than 300~\cite{hernando2010unravelling}.
If the players are advertisements, advertisers might expect positive or negative effects for a limited number of other advertisements when placed on the same web page (e.g., an ad for a phone and an ad for an accessory item might up-sell while two ads for comparable phone models might only cross-sell).}
Under restrictions such as bounded degree and bounded treewidth, some stability questions become solvable in linear time~\cite{pet:graphical-hg-bounded-tw:16}.
Nevertheless, this still incurs the evaluation of the whole game in order to verify whether a coalition structure is stable.
Given the local views of individual vertices within hedonic games, it would be preferable and much more practical to ask only a sample of players for their individual preferences and deduce global properties.

The area of property testing provides a framework to relax such decision problems in favour of sublinear complexity (see~\citep{gol:b:pt:2017} for an overview).
A property tester is a randomized algorithm that decides, with error probability at most $\nicefrac{1}{3}$, whether the input satisfies some property~$\mathcal{P}$ or is far from satisfying $\mathcal{P}$ by probing only a small part of it.
In the setting of graph properties, a graph~$G$ with bounded vertex degree $d$ is $\epsilon$-far from satisfying some property~$\mathcal{P}$ (e.g., bipartiteness) if one has to modify at least $\epsilon d n$ edges to make $G$ have property~$\mathcal{P}$.
If the property tester always accepts graphs in $\mathcal{P}$, it has one-sided error; otherwise, it has two-sided error.
The input graph~$G$ may be probed by the algorithm through an oracle that provides access to the entries of the adjacency lists of $G$, and the computational complexity of the property tester is measured in terms of queries it asks.

In comparison to classic decision problems, property testing problems allow for algorithms with sublinear complexity. For example, a randomized decision algorithm for graph connectivity needs to read the whole input to achieve constant error probability, which implies a linear lower bound on the complexity. In contrast, a property tester for connectivity has only constant complexity~\cite{GolPro02}. This difference arises because the property tester does not need to read the whole input, and, in fact, sublinear complexity renders this impossible. Therefore, the input model plays an important role in property testing. While there is a characterization for constant query testable properties in dense graphs (graphs with $\Omega(\lvert V \rvert^2)$ edges) \cite{AloCom09}, less is known for graphs with bounded degree and general graphs.

\subsection{Our Contribution}
We study property testing of stability problems in \fenhg s, where each player has a bounded number of symmetric relationships to friends and enemies as represented by labelled edges of an undirected graph, and preferences are extended to coalitions by any utility function linear in the number of friends and enemies in a coalition.
The setting of hedonic games enhances graphs by rich semantics, which stands in contrast to purely combinatorial and geometric properties previously studied in graph property testing.
We model the semantics of hedonic games as an additional layer on top of the combinatorial graph structure and analyse existence and verification problems for various stability concepts.
In particular, we study
common
individual-based
stability concepts
such as perfection, individual rationality, Nash stability, and (contractual) individual stability%
\opt{full}{
  as well as core stability%
}%
.

While individually rational, Nash-stable, individually stable, and contractually individually stable coalitions always exist,
there are games which do not allow a perfect coalition structure.

\begin{theorem}\label{mainthm:existence}
	Given an \fenhg~$G$ with bounded degree $d$, it can be tested whether $G$ admits a perfect coalition structure with bounded coalition size~$c$. The tester has one-sided error and query complexity $\mathrm{poly}(\epsilon, c, d)$.
\end{theorem}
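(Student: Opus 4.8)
The plan is to reduce the existence question to testing a local graph property. The first step is a structural characterization of the \fenhg s that admit a perfect coalition structure with coalitions of size at most $c$. Since each player's utility is linear in the number of friends and enemies in its coalition, a coalition $C$ is among player $v$'s most preferred coalitions exactly when $C$ contains all friends of $v$ and no enemy of $v$ (neutral players contribute nothing). Consequently, in any perfect coalition structure the coalition of a player contains all of that player's friends, and iterating this along friendship edges forces every connected component of the friendship graph to lie inside a single coalition, which must moreover contain no enemy edge between two of its members. Conversely, if every friendship component has at most $c$ vertices and no two of its members are enemies, then taking the friendship components themselves (friendship-isolated players becoming singletons) as coalitions yields a perfect coalition structure with coalition size at most $c$. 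I would record this as a proposition: $G$ admits the desired structure iff $G$ contains none of the \emph{forbidden configurations} (i) a set of $c+1$ vertices connected by friendship edges, equivalently a friendship component of size larger than $c$; and (ii) a set of at most $c$ vertices connected by friendship edges that contains an enemy edge. Each forbidden configuration spans at most $c$ friendship edges, hence is contained in the friendship-graph ball of radius $c$ around each of its vertices.

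Second, I would prove the combinatorial core: if $G$ is $\epsilon$-far from admitting such a structure, then at least $\epsilon n$ vertices lie in a \emph{bad} friendship component, i.e.\ one of size larger than $c$ or one containing an enemy edge between two of its members. This is shown by contraposition via local repair: if fewer than $\epsilon n$ vertices lie in bad components, delete all friendship edges inside every bad component; then every friendship component becomes either a singleton or an untouched good component, so every remaining friendship component has at most $c$ vertices and no internal enemy edge (enemy edges, left untouched, now join only vertices in distinct coalitions), and the resulting game admits the desired coalition structure by the characterization. The number of deleted edges is at most $\tfrac{d}{2}$ times the number of vertices in bad components, hence less than $\tfrac12 \epsilon d n \le \epsilon d n$, contradicting $\epsilon$-farness. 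Since every vertex of a bad friendship component has a forbidden configuration in its radius-$c$ friendship ball, at least $\epsilon n$ vertices are witnesses to non-membership.

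Third, the tester: sample $\Theta(1/\epsilon)$ vertices uniformly at random; from each sampled vertex $v$ run a breadth-first search in the friendship graph to radius $c$, querying the adjacency list of every reached vertex and halting once more than $c$ vertices have been seen; reject if a friendship component of size larger than $c$ is found, or if an enemy edge is found inside a friendship component of size at most $c$; accept otherwise. Each exploration reads $O(c)$ adjacency lists and thus makes $O(cd)$ queries, for a total of $O(cd/\epsilon) = \mathrm{poly}(\epsilon, c, d)$ queries. The error is one-sided: by the characterization a game admitting the structure contains no forbidden configuration and is never rejected; and if $G$ is $\epsilon$-far, a uniformly sampled vertex is a witness with probability at least $\epsilon$, so $\Theta(1/\epsilon)$ samples reveal a forbidden configuration and the tester rejects, with probability at least $2/3$.

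The step I expect to be the main obstacle is the structural characterization itself — specifically, checking rigorously that configurations (i) and (ii) are the \emph{only} obstructions once the coalition size is bounded by $c$ (the subtleties being the ties introduced by neutral players and the behaviour of players with many friends, whose most preferred coalitions are larger than $c$ and are therefore already ruled out by (i)), and that the propagation argument along friendship edges is airtight. After that the property is genuinely local, and the remaining work — the repair lemma and the neighbourhood exploration — is routine, up to tracking the dependence of the radius and sample size on $\epsilon$, $c$, and $d$.
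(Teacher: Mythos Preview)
Your proposal is correct and follows essentially the same approach as the paper's proof (Theorem~\ref{thm:perfection-bounded}): characterize existence via the friendship components, sample $\Theta(1/\epsilon)$ vertices, and run a BFS along friend edges capped at $c$ vertices, rejecting on an oversized component or an internal enemy edge. Your $\epsilon$-far analysis via the explicit repair (deleting friend edges inside bad components and counting bad vertices) is slightly more detailed than the paper's, which invokes edge-monotonicity and a minimal removal set, but the argument and the resulting tester are the same.
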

\revm{We provide the technical result as Theorem~\ref{thm:perfection-bounded}
in Section~\ref{sec:existence}.}%

While the existence problem as to whether a game allows a stable outcome is a property of edge-labelled graphs, the verification problem of whether a game satisfies stability according to a given coalition structure~$\Gamma$ requires additional modelling: We assume that next to oracle access to the adjacency lists of the underlying bounded-degree graph of a game $G$, we have additional access to an oracle to~$\Gamma$, i.e., \revb{to the given} partition of the vertex set.

\begin{theorem}\label{mainthm:verification}
	Given an \fenhg~$G$ with bounded degree $d$ and a coalition structure $\Gamma$, it can be tested whether $G$ is stable under~$\Gamma$ with respect to perfection, individual rationality, Nash stability, individual stability and contractual individual stability with one-sided error and query complexity $\mathrm{poly}(\epsilon, d)$.
\opt{full}{
	For core stability we obtain a one-sided error tester with query complexity dependent on $\epsilon$, $c$, and $d$, where $c$ is the maximum coalition size.
}
\end{theorem}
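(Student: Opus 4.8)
The plan is to give, for each of the six stability notions, a tester of the same shape: sample $s=\Theta(1/\epsilon)$ vertices of $G$ uniformly at random and, for each sampled vertex, evaluate a suitable \emph{violation predicate} from a local view only; accept iff no sampled vertex is flagged. One-sided error is then immediate: if $(G,\Gamma)$ is stable under the notion at hand there are no violators, so the tester always accepts. It then remains to show (a) each violation predicate is cheap to evaluate, and (b) if $(G,\Gamma)$ is $\epsilon$-far from being stable under $\Gamma$, then at least $\epsilon n$ vertices are violators, so that $\Theta(1/\epsilon)$ samples hit one with probability $\ge 2/3$.

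For (a), the key point is that in the FEN-encoding with degree bound $d$ and utilities linear in the numbers of friends and enemies, the value $u_i(C)$ depends only on the $\le d$ labelled edges at $i$ and on which of $i$'s $\le d$ neighbours lie in $C$, all obtainable from the adjacency-list oracle and the $\Gamma$-oracle. Hence each individual-based notion localises: $i$ is an IR-violator iff $u_i(\Gamma(i))<u_i(\{i\})$; a perfection-violator iff $u_i(\Gamma(i))$ is below the value of having all of $i$'s friends and no enemies; a Nash-violator iff $i$ strictly prefers some coalition in $\{\Gamma(u):u\sim i\}\cup\{\{i\}\}$ (other coalitions of $\Gamma$ contain no friend or enemy of $i$ and are thus equivalent to being alone); an IS-violator iff additionally $i$ has no enemy in that coalition (by symmetry of the relations, a member of the target coalition objects to $i$'s entry exactly when it is an enemy of $i$); a CIS-violator iff moreover $i$ has no friend in $\Gamma(i)$, so that nobody left behind is harmed. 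Each predicate costs $O(d)$ queries.

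For (b) I argue the contrapositive by a monotone local repair. Let $B$ be the set of violators for the notion considered and, for $i\in B$, let $N_i$ collect $i$'s enemy-edges lying inside $\Gamma(i)$ together with $i$'s friend-edges leading outside $\Gamma(i)$, so $|N_i|\le d$; delete every edge in $S=\bigcup_{i\in B}N_i$, changing at most $d|B|$ labels. Afterwards each $i\in B$ has friends exactly $F_i\cap\Gamma(i)$, all inside $\Gamma(i)$, and no enemy inside $\Gamma(i)$, so $u_i(\Gamma(i))$ is simultaneously $i$'s maximum attainable utility and $\max_C u_i(C)$ over all coalitions; thus $i$ is a violator of none of the six notions. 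Crucially no new violator appears: every edge of $S$ is, seen from either endpoint $m$, an enemy inside $\Gamma(m)$ or a friend outside $\Gamma(m)$, and deleting such an edge weakly decreases $u_m(C)-u_m(\Gamma(m))$ for every coalition $C$ (and likewise the gap to $m$'s best coalition), so no player gains a profitable or blocking deviation it did not already have. Since $S$ restricted to the edges at $i\in B$ is exactly $N_i$, every original violator is indeed neutralised, so $d|B|$ modifications make $(G,\Gamma)$ stable; being $\epsilon$-far therefore forces $|B|>\epsilon n$. This yields the claimed one-sided-error testers of query complexity $\mathrm{poly}(1/\epsilon,d)$ for perfection, individual rationality, Nash stability, individual stability, and contractual individual stability.

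The same repair settles core stability: the vertices in a blocking coalition are exactly the core-violators, and the deletions move every such vertex into a utility-maximising coalition — hence out of every blocking coalition — while, as above, not raising any player's relative preference for a foreign coalition, so no new blocking coalition is created and $\epsilon$-farness again forces more than $\epsilon n$ core-violators. The real obstacle, and the reason $c$ enters, is that membership in a blocking coalition is not a local predicate. I resolve this by restricting attention to blocking coalitions of size at most $c$ — natural when the coalition structure has maximum coalition size $c$ — and by noting that if $v$ lies in such a blocking coalition $C$, then the connected component of $v$ in the graph induced by $C$ is itself a blocking coalition (each of its members has the same neighbours inside it as inside $C$), and it is connected and of size $\le c$, hence contained in the radius-$c$ ball around $v$, which has at most $d^{\,c}$ vertices. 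Enumerating the connected size-$\le c$ subsets of that ball and testing each for blocking with $O(cd)$ queries detects a core-violator in $\mathrm{poly}(c,d)$ queries, giving a tester of complexity $\mathrm{poly}(1/\epsilon,c,d)$. I expect this passage from the non-local core condition to a bounded-radius search — and the dependence on $c$ it forces — to be the main difficulty.
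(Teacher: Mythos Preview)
Your proposal is correct and follows essentially the same approach as the paper: sample $\Theta(1/\epsilon)$ players, check a local witness predicate for each, and use the repair that deletes every violator's enemy-edges inside and friend-edges outside its coalition (the paper's Lemma~\ref{lemma:witness}), together with the symmetry observation that any deleted edge is of the same type from either endpoint (the paper's Observation~\ref{lemma:responsive} and Conditions~(iii)/(iv) of Definition~\ref{def:feasible-phi}) to rule out new violators; for the core you reduce to connected blocking coalitions exactly as the paper does. The only imprecision is your phrase ``restricting attention to blocking coalitions of size at most $c$'': in the paper's model $c$ bounds all coalitions by assumption, so this is not an additional restriction but part of the setting.
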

\revm{We provide the technical result as Theorem~\ref{thm:verification-ind-stability} based on Theorem~\ref{thm:verification-tester} in Section~\ref{sec:verification}.}

Note that while we consider $c$ and $d$ to be of constant size, independent of the input size~$n$, our statements remain valid if, for instance, $d\in\mathcal{O}(\log n)$. We provide extensions of our theorems to weighted graphs (additively separable games) and directed graphs (asymmetric games) in Section~\ref{sec:extensions}.

\revm{A preliminary version~\cite{hgpt:c:2019} of this work was published in the proceedings of \mbox{AAMAS}~2019 (International Conference on Autonomous Agents and Multiagent Systems). This is a revised version that also extends the preliminary version by a verification tester for core stability (see Theorem~\ref{thm:verification-ind-stability}), an extension of our main theorems to weighted graphs (see Theorems~\ref{thm:weighted-verification-tester} and~\ref{thm:weighted-perfection-bounded}), an extension of our main theorem on verification testing to directed graphs (see Theorem~\ref{thm:verification-directed}) and a lower bound related to our main theorem on testing the existence of a perfect coalition structure for directed graphs (see Theorem~\ref{thm:perfection-bound-unidirectional}).}

\subsection{Related Work}

Hedonic games were formally defined by~\citet{ban-kon-soen:j:core-simple-coalition-formation-game} and \citet{bog-jac:j:stability-hedonic-coalition-structures}.
A well-known application of a restricted variant with size-two coalitions is the stable-roommates problem~\cite{stable-roommates} for the allocation of student houses.
Mostly, hedonic games
have been analysed from a computational complexity point of view with respect to a trade-off between expressivity, succinct representation and tractability of stability decision problems. The complexity of general hedonic games has first been studied by \citet{bal:hg-complexity}. The worst-case complexity of stability problems for various representations and different stability concepts has been studied extensively \revb{since}:
Popular representations include additively separable hedonic games~\cite{bog-jac:j:stability-hedonic-coalition-structures,azi-bra-see:j:ashg,woe:ashg-sigma2p-hardness-core:13}, singleton encodings~\cite{cec-rom:j:singleton-encoding}, hedonic coalition nets~\cite{elk-woo:hedonic-coalition-nets}, and dichotomous preferences~\cite{azi-har-lan-woo:c:boolean-hg}; see also~\citet{hg-handbook} and~\citet{cha-elk-woo:b:ccgt} for an overview.
\citet{pet-elk:c-coopmas:simple-causes-complexity-hg} analysed causes of and conditions for hardness.
The existence of Nash stability and other individual stability concepts is often (if not guaranteed \revc{to exist}) $\mathrm{NP}$-complete to decide (see~\citet{SunCom10} for an analysis of additively separable games). For core stability, this is often even harder, namely $\Sigma_2^p$-complete~\cite{woe:ashg-sigma2p-hardness-core:13,ota-etal:hg-fen:17}.
\citet{dim-etal:hg-fe:06} defined restricted hedonic games based on a network of friends and enemies. A more general version including neutral players was defined by \citet{ota-etal:hg-fen:17}.
Games with neutral players and partial individual evaluations were studied by \citet{lan-rey-rot-sch-sch:c:hgopt} and \citet{pet:graphical-hg-bounded-tw:16}.
In the latter work, in particular, the authors considered a constant bound on the number of individual preferences and studied graphical hedonic games with bounded treewidth.
With this restriction, it can be decided in linear time whether, for instance, a Nash-stable coalition structure exists.
A graphical model restricting to the formation of coalitions between players that are connected in the underlying graph was studied by \citet{iga-elk:c:hg-graph-restricted}.
\citet{DarGro18} studied games where players have preferences over different types of coalitions (activities) and their sizes instead of \revb{the participating} players \revb{(this can also be expressed as a hedonic game)}. For the case that there is only one type of coalition, \citet{LeeSta15} and \citet{LeeCom17} extended the setting by \citeauthor{DarGro18} so that each player may have a bounded number of friend-enemy relationships to other players.

\revb{The most related work to ours in the area of property testing is that for classic properties of graphs as there is, to the best of the authors' knowledge, no work on game theory so far. For example, \citet{GolPro02} showed that well-known properties of bounded-degree graphs like cycle freeness are testable with constant query complexity and \emph{two-sided error}. However, for \emph{one-sided error} testing of cycles freeness, $\Theta(\sqrt{n})$ queries are required \citep{GolPro02} as well as sufficient as \citet{doi:10.1002/rsa.20462} show. Property testing of annotated (or labelled) graphs has been studied for geometric graphs mainly, e.g., by \citet{ben2007lower}, \citet{czumaj2008testing} and \citet{hellweg2010testing}.}

Learning hedonic games was studied by \citet{SliLea17}. While property testing focuses on testing whether a game admits a stable outcome or whether an outcome is stable with sublinear complexity, PAC learning constructs a good hypothesis and \emph{PAC stabilization} uses this hypothesis to learn a stable outcome (if possible) using a superlinear number of samples (with possibly linear size).
As far as we know, no sublinear algorithms have been developed for hedonic games, yet.

\section{Preliminaries}

In this \opt{conference}{paper}\opt{full}{article}, we consider
graphs with vertex degrees bounded by a constant $d$.
\revc{If not stated otherwise, graphs are undirected and unweighted.}
For a graph $G=(V,E)$ at hand, we write $n = \lvert V \rvert$.
Without loss of generality, we assume that $V = [n] = \{1,\dots,n\}$.

\subsection{Hedonic Games}
A \emph{hedonic game}~$(N,\succeq)$ consists of a set of \emph{players}~$N=[n]$ and a \emph{preference profile}~$\succeq=(\succeq_1,\dots,\succeq_n)$, where $\succeq_i$ is player $i$'s preference relation over $\mathcal{N}_i=\{C\subseteq N\mid i\in C\}$. A subset $C\subseteq N$ of players is called \revb{a} \emph{coalition}.
An output of a hedonic game is a \emph{coalition structure}, i.e., a partition~$\Gamma$ of the player set. Let $\Gamma(i)\in\Gamma$ be the coalition containing~$i\in N$.
We say that player~$i$ \emph{weakly prefers} coalition~$A$ to coalition~$B$, if $A\succeq_i B$. Player~$i$ \emph{prefers} $A$ to $B$, denoted by $A\succ_i B$, if $A\succeq_i B$, but $B\not\succeq_i A$; $i$ is \emph{indifferent} between $A$ and $B$, denoted by $A\sim_i B$,
if $A\succeq_i B$ and $B\succeq_i A$.

Since the set $\mathcal{N}_i$ of coalitions a player is contained in has an exponential size in the number of players, a central question in the study of hedonic games is to define representations that are adequately compact and at the same time as expressive as possible.

One common representation is that of a graph network, where the players \reva{in $N$} are vertices in the graph. In the encoding as defined by \citet{ota-etal:hg-fen:17},
for each player $i\in N$, there exists a set $N_i^+\subseteq N\setminus\{i\}$ of \emph{friends} \revb{and a set $N_i^-\subseteq N\setminus\{i\}$} of \emph{enemies}, $N_i^+\cap N_i^-=\emptyset$.
The remaining players are considered as \emph{neutral} $N_i^0=N\setminus (N_i^+\cup N_i^-\cup\{i\})$.
We call this representation \emph{FEN-encoding}.
It can be represented by a labelled graph~$G=(N,F\cup E)$ with $F\cap E=\emptyset$,
where $j\in N_i^+$ if and only if $(i,j)\in F$, and $j\in N_i^-$ if and only if $(i,j)\in E$.
\opt{full}{\revm{%
We distinguish between the following cases:
\begin{description}
 \item[\rmfamily\mdseries \textit{undirected, unweighted FEN-encoding:}]
   If not stated otherwise, we consider undirected and unweighted graphs.
   Hence, we have symmetric preferences, i.e., $i\in N_j^+$ if and only if $j\in N_i^+$ and $i\in N_j^-$ if and only if $j\in N_i^-$.
 \item[\rmfamily\mdseries \textit{directed, unweighted FEN-encoding:}]
   If the graph is directed, preferences can be asymmetric, e.g., player~$i$ can consider $j$ as a friend, but $j$ can consider $i$ as neutral or even an enemy.
 \item[\rmfamily\mdseries \textit{weighted FEN-encoding:}]
   In addition edges can be weighted, i.e., each player~$i$ specifies a value $w(i,j)$ for each other player~$j\in N_i$. We assume that $w(i,j)$ is positive if $j\in N_i^+$, and negative if $j\in N_i^-$.
   Here edges can also be either undirected ($w(i,j)=w(j,i)$) or directed (possibly different values for $w(i,j)$ and $w(j,i)$).
\end{description}
}%
}
\revnew{The latter two cases are studied in Section~\ref{sec:extensions}.}
\revm{In the unweighted case}, we extend the players' relations to preferences in the following manner.
A value function is specified such that each player~$i\in N$ assigns
\revb{a fixed positive value to each friend and a fixed negative value to each enemy. Formally, for two values~$f,e\in\mathbb{R}_{>0}$, a}
corresponding utility function $u_i:\mathcal{N}_i\to\mathbb{R}$, $i\in N$, is defined additively by \[u_i(C)=f\cdot\vert C\cap N_i^+\vert -e\cdot\vert C\cap N_i^-\vert.\]
For instance, \emph{friends appreciation}
\revb{can be represented by} $f=d$ and $e=1$,
and \emph{enemies aversion} \revb{by} $f=1$ and $e=d$.
The preference extension is obtained by $A\succeq_i B\iff u_i(A)\geq u_i(B)$.

\begin{definition}[FEN-hedonic game]
 \label{def:fenhg}
 \revb{Let $(N,\succeq)$ be} a hedonic game represented by an \revb{undirected and unweighted} FEN-encoding, with preference profile~$\succeq$ extended via \revc{utility function~$u_i$}.
 \revb{We call $(N,\succeq)$} \emph{\fenhg}.
 
 \opt{full}{\revm{For directed and weighted FEN-encodings, we refer to such games as \emph{directed} and \emph{weighted} \fenhg s, respectively}.}
\end{definition}

\reva{Figure~\ref{fig:sc} below shows two examples of a \fenhg.}
\revc{This conforms to the definition of a graphical hedonic game~\cite{pet:graphical-hg-bounded-tw:16}
such that a player~$i$'s preference of a coalition $C\in\mathcal{N}_i$ over a coalition $D\in\mathcal{N}_i$ only depends on $i$'s neighbourhood $N_i=N_i^+\cup N_i^-$:
\begin{align}
 C\succeq_i D \iff C\cap N_i \succeq_i D\cap N_i. \label{eq:comparison}
\end{align}
}%
Note that \emph{responsiveness} is always satisfied, i.e., $C\cup\{j\}\succ_i C$ and $C\succ_i C\cup\{j'\}$, for each $i\in N$, and each $C\in\mathcal{N}_i$ and $j\in N_i^+$, $j'\in N_i^-$.

\opt{full}{%
\revm{Weighted and directed \fenhg s are equivalent to additively separable hedonic games~\cite{bog-jac:j:stability-hedonic-coalition-structures} via $w(i,j)=0$ for $j\in N_i^0$ and \[u_i(C)=\sum_{j\in C\smallsetminus\{i\}}w(i,j)=\sum_{j\in C\cap N_i}w(i,j);\] weighted, undirected \fenhg s are equivalent to symmetric additively separable hedonic games.}
}

Furthermore, we make the following assumptions.
We consider graphs of bounded degree $\vert N_i\vert \leq d$ represented by an adjacency list; in particular, it can be decided in time independent of the number~$n$ of players whether $C\succeq_i D$, and independent of the coalition size $\vert C\vert $ and $\vert D\vert $.
Moreover, it is often useful to restrict the coalition size, e.g., when players are people that have to communicate or when a coalition represents all ads displayed on a single web page. Therefore, we also consider a bounded coalition size of $\vert C\vert \leq c$.

By $\mathfrak{G}_n$ we denote the set of graphs with $n$ vertices that represent such a game.
The set of coalition structures partitioning $n$ players is denoted by $\mathfrak{C}_n$.
In the following, let $G=(N,F\cup E)$ be a graph that represents a \fenhg\ 
and let $\Gamma\in\mathfrak{C}_n$ be a coalition structure solving this game.
There are several solution concepts motivated from different perspectives on the game.
\revc{%
Note that in comparison to the definitions in the literature we take into account the bound on the coalition size $|C|\leq c$. Whenever $c=n$ holds, the stability concepts are equivalent to their original definition.
}%
Let $\mathrm{Fav}(i)$ denote the set of player~$i$'s favourite coalitions of size at most $c$, i.e., those coalitions that $i$ weakly prefers over all other coalitions~$C\in\mathcal{N}_i$ of size $\vert C\vert \leq c$.
On the one hand, $\Gamma$ is called
\begin{description}
\item[\rmfamily\mdseries \textit{perfect}] if each player~$i\in N$ weakly prefers $\Gamma(i)$ to every coalition, i.e., $\Gamma(i)\in\mathrm{Fav}(i)$.
\end{description}
This property reflects an ideal situation, but is rather rarely fulfilled.

On the other hand, $\Gamma$ is called
\begin{description}
\item[\rmfamily\mdseries \textit{individually rational}] if for each~$i\in N$,
$\Gamma(i)$ is \emph{acceptable}, i.e., $\Gamma(i)\succeq_i\{i\}$.
\end{description}
Individual rationality is guaranteed by $\{\{i\}\mid i\in N\}$.
Other stability notions are based, for example, on the lack of deviations of a single player to another (possibly empty) existing coalition.
A coalition structure $\Gamma$ is called
\begin{description}
\item[\rmfamily\mdseries \textit{Nash-stable}] if no player wants to move to another existing or empty coalition, i.e.,
for each player~$i\in N$ and each coalition~$C\in\Gamma\cup\{\emptyset\}$ with $\vert C\vert < c$, it holds that $\Gamma(i)\succeq_i C\cup\{i\}$;
\item[\rmfamily\mdseries \textit{individually stable}] if no player can move
      to another preferred coalition without making a player in the new
      coalition worse off, i.e., for each player~$i\in N$ and for each
      coalition~$C\in\Gamma\cup\{\emptyset\}$ with $\vert C\vert < c$, it holds that
      $\Gamma(i)\succeq_i C\cup\{i\}$ or there exists a player $j\in
      C$ such that $C\succ_j C\cup\{i\}$;
\item[\rmfamily\mdseries \textit{contractually individually stable}] 
      if no
      player can move to another preferred coalition without making a player in
      the new coalition or in the old coalition worse off, i.e., for each player~$i\in N$ and for each coalition~$C\in\Gamma\cup\{\emptyset\}$ with $\vert C\vert < c$, it holds that
      $\Gamma(i)\succeq_i C\cup\{i\}$, or there exists a player $j\in C$
      such that $C\succ_j C\cup\{i\}$, or there exists a player $j'\in
      \Gamma(i)\setminus\{i\}$ such that
      $\Gamma(i)\succ_{j'}\Gamma(i)\setminus\{i\}$.
\end{description}
Note that Nash stability implies individual stability, which, in turn,
implies contractual individual stability.

\begin{figure}
	\centering
	\begin{minipage}[b]{40mm}
		\input{stability_concepts_fig.tex}
		\centering $G$
	\end{minipage}
	\qquad
	\begin{minipage}[b]{40mm}
		\input{stability_concepts2_fig.tex}
		\centering $H$
	\end{minipage}
	\caption{\reva{Two \fenhg s $G,H$ with bounded degree $4$ and $7$~players, a--g. Friend edges are solid green; enemy edges are dashed red. Assume enemies aversion, i.e., $f=1$ and $e=4$ for the utility function $u_i$ that defines the preference extension $\succeq$ (see Definition~\ref{def:fenhg} and its preliminaries). The coalition structure $\Gamma = \{ \{a, c\}$, $\{b, d, f\}, \{e, g\} \}$ is perfect in $G$ if the bound on the coalition size is $c \geq 3$, otherwise there is no perfect coalition structure. Even for unbounded coalition size, there is no perfect coalition structure in $H$. However, the coalition structure $\Gamma$ is core-stable and Nash-stable in $H$ (and due to Nash-stability, it is also individually stable and contractually individually stable).}}
	\label{fig:sc}
\end{figure}

\opt{full}{
A further popular stability concept is based on group deviation. A coalition structure $\Gamma$ is called
\begin{description}
\item[\rmfamily\mdseries \textit{core stable}] if no coalition \emph{blocks} $\Gamma$, i.e., for each non-empty coalition $C\subseteq N$ with $\vert C\vert\leq c$, there exists a player~$i\in C$ such that  $\Gamma(i)\succeq_i C$.
\end{description}
}
\reva{Figure~\ref{fig:sc} shows two examples of \fenhg s and the stability concepts defined above.}
For a stability concept, questions of interest are:
\begin{itemize}
 \item Verification: Given a game and a coalition structure, is it stable?
 \item Existence: Is a given game stable, i.e., does there exist a stable coalition structure?
\end{itemize}

\subsection{Graph Property Testing}
\label{sec:gpt}

Let $G = (\revc{N,F \cup E})$ be a graph with vertex degrees bounded by~$d$ and let $\mathcal{P}$ be a graph property, i.e., a set of graphs (e.g., let $\mathcal{P}$ be all graphs that admit a perfect coalition structure).
We say that $G$ is \emph{$\epsilon$-far} from a property~$\mathcal{P}$ if more than $\epsilon d n$ edges of $G$ have to be modified (i.e., added or removed) in order to convert it into a graph that satisfies the property $\mathcal{P}$, otherwise $G$ is \emph{$\epsilon$-close} to $\mathcal{P}$. A property tester has access to $G$ by querying a function $f_G: N \times [n] \to [n] \cup \{\mathrm{null}\}$, where $f_G(v,i)$ denotes the $i^{th}$ neighbour of $v$ if $v$ has at least~$i$ neighbours. Otherwise, $f_G(v,i) = \mathrm{null}$.

\begin{definition}[one-sided testers]
A one-sided error $\epsilon$-tester for a property $\mathcal{P}$ of bounded-degree graphs with query complexity $q$ is a randomized algorithm $\mathcal{A}$ that makes $q$ \revb{\emph{neighbourhood queries}} to $f_G$ for an input graph~$G$. The algorithm $\mathcal{A}$ accepts $G$ if $G$ has the property $\mathcal{P}$. If $G$ is $\epsilon$-far from $\mathcal{P}$, then $\mathcal{A}$ rejects $G$ with probability at least~$\nicefrac{2}{3}$.
\label{defn:one-sided}
\end{definition}

\revc{Similarly to one-sided error decision algorithms, a one-sided error tester has to accept all graphs that have the property at hand. Therefore, it has to present a witness against the property when it rejects.}

Many classic graph properties (e.g., planarity) are maintained when edges are removed from the graph; we observe the same for properties of hedonic games.

\begin{definition}[edge-monotonicity]
A graph property~$\mathcal{P}$ is \emph{edge-monotone} if for every $G=(N, F \cup E) \in \mathcal{P}$, $\{ (N, F' \cup E') \mid F' \subseteq F \wedge E' \subseteq E \} \subseteq \mathcal{P}$. In other words, every subgraph of $G$ is also in $\mathcal{P}$.
\end{definition}

\section{The Model of Property Testing for Stability Concepts}
\label{sec:pt_of_sc}

To test stability concepts, we \revc{slightly} generalize the standard edit distance \revb{that underlies the definition of being $\epsilon$-far} as follows. Since we consider graphs $G = (V, F \cup E)$ that represent \fenhg s, we have to account for the two types of edges: friends and enemies. Therefore, an edge modification is one insertion of an element to or one removal of an element from $F \cup E$, respectively, while maintaining $F \cap E = \emptyset$. In particular, turning a friend edge into an enemy edge is counted as two edge modifications (removing it from $F$ and inserting it into $E$). The intuition of these semantics is that edge modifications measure the number of compromises that are needed to reach a stable situation. If a partition is too far from being stable, too many compromises are necessary, and the partition should be discarded. Everything in-between is not an ideal situation, but may be affordable.

Now, the existence of a stable outcome in a game is modelled as a graph property as follows.

\begin{definition}[stability existence property]
  The set of stable graphs with respect to some stability concept (e.g., Nash stability) is the set of all graphs that admit a stable coalition structure.
\end{definition}

For some stability concepts, the existence of a stable outcome is guaranteed. Nevertheless, the question of whether a given partition~$\Gamma$ satisfies the stability property can still be hard to decide. For all stability concepts mentioned above, the worst case time that is needed to verify stability of $\Gamma$  is at least linear in the number of players. We can, however, tackle the following problem in sublinear time: Given a graph~$G$ and a partition of vertices~$\Gamma$, is $\Gamma$ a stable outcome for the game represented by~$G$, or is $G$ $\epsilon$-far from being a stable instance for $\Gamma$?

\begin{definition}[$\Gamma$-stability verification property]
  Let~$n \in \mathbb{N}$, and let~$\Gamma$ be a partition of~$[n]$. The set of~$\Gamma$-stable graphs with respect to some stability concept (e.g., Nash stability) is the set of $n$-vertex graphs $G$ such that~$\Gamma$ is stable for~$G$.
\end{definition}

Note that, unlike the existence of a stable coalition structure, a stability property is not closed under isomorphism as long as $\Gamma$ is not permuted accordingly.
Therefore, extending the basic model of graph property testing to reflect the semantics of hedonic games is the foundation of our main contribution, \revc{and it is the main difference between our model and the vanilla model of graph property testing}.

\begin{definition}[\revb{query access to $\Gamma$}]
	Access to $\Gamma$ is provided by a set oracle that supports two queries. A \revb{\emph{find query}} returns, given a vertex $v$, the key of the set that contains $v$. A \revb{\emph{member query}} returns, given a key~$k$ and an index $i$, the $i$-th element of the set represented by $k$, or $\mathrm{null}$ if no such element exists.
\end{definition}

\section{Property Testing Results in the FEN-Model}

In this section we study property testers for stability verification problems, resulting in Theorem~\ref{mainthm:verification}, as well as stability existence problems, resulting in Theorem~\ref{mainthm:existence}, for various individual-based stability notions within the previously defined model of \fenhg s.

\subsection{Testing Verification Problems}
\label{sec:verification}

In the following we prove Theorem~\ref{mainthm:verification},
the testability of verification problems with query complexity dependent only on the degree bound, but independent of the input graph's size, which is restated in Theorem~\ref{thm:verification-ind-stability} below.
\revb{Our algorithm can test the verification problem for every stability concept that can be defined using the notion of \emph{feasible player properties} (see Definition~\ref{def:feasible-phi}), which is what we prove in Theorem~\ref{thm:verification-tester}. We state the feasible player properties for \revc{all of the stability concepts in Table~\ref{tab:overview_concepts}} and prove the query complexity of Algorithm~\ref{alg:verification-tester} for these properties (see Theorem~\ref{thm:verification-ind-stability}).}

We observe that due to responsiveness in \revc{all preference extensions of \fenhg s}, an edge modification that benefits one player, can never be a disadvantage for other players.

\begin{observation}\label{lemma:responsive}
 Let $i,j\in N$ be two players in the \fenhg\ represented by a graph~$G=(N,F\cup E)$. Furthermore, let $\succeq_i$ be $i$'s original preference relation, and $\succeq_i'$ the preference relation of $i$ after deleting edge~$(i,j)$. Formally, the resulting game is represented by the graph $G'=(N,F \cup E \setminus \{(i,j)\})$ with the same preference extension.
 The following statements hold:
 \begin{enumerate}
  \item If $(i,j) \revb{\in} E$,
  for each $C\in\mathcal{N}_i$, $j\notin C$, it holds that $C\succ_i C\cup\{j\}$ and $C\cup\{j\}\sim_i' C$,
  \item If $(i,j) \revb{\in} F$, for each $C\in\mathcal{N}_i$, $j\notin C$, it holds that $C\cup\{j\}\succ_i C$ and $C\cup\{j\}\sim_i' C$,
  \item If $(i,j) \revb{\in} E$, and $C\succ_i D$ for two coalitions $C,D\in\mathcal{N}_i$
  with $j\in C$, it holds that $C\succ_i' D$.
  \item If $(i,j) \revb{\in} F$, and $C\succ_i D$ for two coalitions $C,D\in\mathcal{N}_i$
  with $j\notin C$, it holds that $C\succ_i' D$.
 \end{enumerate}
\end{observation}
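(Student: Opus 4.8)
The plan is to verify each of the four statements by a direct computation with the additive utility function $u_i(C)=f\cdot|C\cap N_i^+|-e\cdot|C\cap N_i^-|$, tracking how the relevant cardinalities change when the edge $(i,j)$ is removed. Throughout, write $u_i$ for the utility in $G$ and $u_i'$ for the utility in $G'$; note that deleting $(i,j)$ from $E$ moves $j$ from $N_i^-$ to $N_i^0$, so $u_i'(C)=u_i(C)+e$ if $j\in C$ and $u_i'(C)=u_i(C)$ if $j\notin C$; symmetrically, deleting $(i,j)$ from $F$ moves $j$ from $N_i^+$ to $N_i^0$, so $u_i'(C)=u_i(C)-f$ if $j\in C$ and $u_i'(C)=u_i(C)$ if $j\notin C$.

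For statement~(1), $j\in N_i^-$ in $G$, so $u_i(C\cup\{j\})=u_i(C)-e<u_i(C)$, giving $C\succ_i C\cup\{j\}$ (this is just responsiveness). After deletion, $j\in N_i^{0\prime}$, hence $u_i'(C\cup\{j\})=u_i'(C)$, i.e.\ $C\cup\{j\}\sim_i' C$. Statement~(2) is the mirror image: $j\in N_i^+$ in $G$ gives $u_i(C\cup\{j\})=u_i(C)+f>u_i(C)$, and after deletion $u_i'(C\cup\{j\})=u_i'(C)$, so $C\cup\{j\}\sim_i' C$. For statement~(3), we are told $u_i(C)>u_i(D)$ with $j\in C$; after deleting $(i,j)$ from $E$ we have $u_i'(C)=u_i(C)+e$, while $u_i'(D)=u_i(D)+e$ if $j\in D$ and $u_i'(D)=u_i(D)$ if $j\notin D$. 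In either case $u_i'(C)-u_i'(D)\geq u_i(C)-u_i(D)>0$ (the gap can only grow, since $j$ contributes non-negatively to the $C$ side and the $D$ side gains at most as much), so $C\succ_i' D$. Statement~(4) is the analogous argument for deletion from $F$: with $j\notin C$ we get $u_i'(C)=u_i(C)$, while $u_i'(D)=u_i(D)-f$ if $j\in D$ and $u_i'(D)=u_i(D)$ otherwise, so again $u_i'(C)-u_i'(D)\geq u_i(C)-u_i(D)>0$.

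I do not expect a genuine obstacle here; the only point requiring a little care is the case analysis on whether $j$ also lies in the comparison coalition $D$ in statements~(3) and~(4), and checking that in the ``bad'' sub-case the inequality is still preserved rather than merely unchanged. Since $f,e>0$ by definition of the value function, every contribution has the right sign, and the strict inequality $C\succ_i D$ is maintained. The statements about players $j\neq i$ mentioned in the surrounding text follow because edge $(i,j)$ is symmetric, so the same four cases apply verbatim with the roles of $i$ and $j$ exchanged; this is why such a modification ``can never be a disadvantage'' for the other endpoint.
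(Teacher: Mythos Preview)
Your proof is correct: the direct computation with the additive utility $u_i$ is exactly the justification the paper has in mind, as it states the observation without proof and merely remarks that it follows ``due to responsiveness and symmetry.'' Your explicit case split on whether $j\in D$ in statements~(3) and~(4) is the only point requiring any care, and you handle it cleanly.
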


Many stability concepts are of the form such that stability holds if and only if no player~$i$ satisfies a certain condition~$\phi(i)$. If there exists a player~$j$ that satisfies this condition~$\phi(j)$, we call $j$ a \emph{witness} for non-stability. %
Let $\phi$ denote such a player property assigning each player~$i$ either value~$1$ ($i$ is a witness against the property) or value~$0$ ($i$ is not a witness). In the following proofs we require certain conditions to hold for $\phi$ and show that \revc{all concepts in Table~\ref{tab:overview_concepts}} share these conditions, which enables us to devise a unified testing scheme.
\begin{definition}[feasible player property]\label{def:feasible-phi}
Let $\gamma$ be a stability concept and let
$\phi:\mathfrak{G}_n\times\mathfrak{C}_n \to\{0,1\}^{n}$ be a function that is parameterized by $n \in \mathbb{N}$.
If, for every $n$-player game $G \in \mathfrak{G}_n$, every coalition structure~$\Gamma\in\mathfrak{C}_n$ and every $i\in N$, it holds that $\phi_i(G,\Gamma)=0$ if and only if $\Gamma$ is stable in $G$ with respect to $\gamma$, then $\phi$ is a \emph{player property}.
We say that $\phi$ is \emph{feasible} if for every $n \in \mathbb{N}$, every $G=(N, F\cup E)\in\mathfrak{G}_n$ and every $\Gamma\in\mathfrak{C}_n$ the following conditions are met for each $i\in N$.
\begin{enumerate}[label=(\roman*)]
 \item \label{feasible:reparable} $\Gamma(i)\in\mathrm{Fav}(i) \implies \phi_i(G,\Gamma)=0$.
 \item \label{feasible:constant} The value $\phi_i(G,\Gamma)$ can be determined with a constant number of queries to the oracles for $G$ and $\Gamma$ (i.e., dependent on $\epsilon, d$, but independent of $n$).
 \item \label{feasible:responsive1} If $\phi_i(G,\Gamma)=0$ and an edge $(j,i)$, $j\notin\Gamma(i)$ is removed from $F$, resulting in a new game $G'$, it holds that $\phi_i(G',\Gamma)=0$ remains valid.
 \item \label{feasible:responsive2} If $\phi_i(G,\Gamma)=0$ and an edge $(j,i)$, $j\in\Gamma(i)$ is removed from $E$, resulting in a new game $G'$, it holds that $\phi_i(G',\Gamma)=0$ remains valid.
\end{enumerate} 
\end{definition}

\begin{table}
	\centering
	\begin{tabular}{lp{1.8cm}p{4.7cm}ll}
		\toprule
		                  & concept                          & player property: $\phi_i(G,\Gamma)=1\iff$                                                                                                                                                                                 & always ex. & is sol. for \\ \midrule
		1                 & perfection                       & $  \exists C\in\mathcal{N}_i:C\succ_i\Gamma(i)$                                                                                                                                                                           & no         & 2, 3, 4, 5, 6   \\
		\addlinespace
		2 & individual \mbox{rationality}    & $\{i\}\succ_i\Gamma(i)$                                                                                                                                                                                                   & yes        &             \\
		\addlinespace
		3 & Nash stability                   & $\exists C\in \Gamma\cup\{\emptyset\}: C\cup\{i\}\succ_i\Gamma(i)$                                                                                                                                                        & yes        & 2, 4, 5       \\
		\addlinespace
		4 & individual \mbox{stability}      & $\exists C\in \Gamma\cup\{\emptyset\}: C\cup\{i\}\succ_i\Gamma(i)$\newline and $\forall j\in C: C\cup\{i\}\succeq_j C$                                                                                                    & yes        & 2, 5           \\
		\addlinespace
		5 & contractual individual stability & $\exists C\in \Gamma\cup\{\emptyset\}: C\cup\{i\}\succ_i\Gamma(i)$\newline and $\forall j\in C: C\cup\{i\}\succeq_j C$ and\newline $\forall j'\in \Gamma(i)\setminus\{i\}: \Gamma(i)\setminus\{i\}\succeq_{j'} \Gamma(i)$ & yes        &             \\
		\addlinespace
		6 & core stability                   & $\exists C \in \mathcal{N}_i: \forall j \in C: C \succ_j \Gamma(j)$                                                                                                                                                       & no         & \revnew{2}            \\ \bottomrule
		                  &                                  &                                                                                                                                                                                                                           &            &
	\end{tabular}
	\caption{\reva{Overview of stability concepts, their player properties, whether a stable coalition structures always exists in an \fenhg, and whether a stable solution is (always) a solution for other concepts.}}
	\label{tab:overview_concepts}
\end{table}

Player properties for the stability concepts we consider are summarized in Table~\ref{tab:overview_concepts}. We note that if $\phi$ is feasible, then $\phi$ is edge-monotone. \revc{Feasibility (including edge-monotonicity) and the following two lemmas act as the tie between stability concepts we consider and property testing. In particular, they expose characteristics of the stability concepts that can be exploited by a property tester.}

In our constructions we relate to the players' favourite coalitions and make use of the following lemma that states that we can easily modify a player's local surroundings to turn the current coalition into a favourite coalition. In other words, only a constant number of compromises suffice to optimise one player's current situation. \revb{In particular, this turns a witness according to a feasible player property into a non-witness.}

\begin{lemma} \label{lemma:witness}
 Let~$G=(N,F\cup E)$ be a graph with bounded degree~$d$ and let~$\Gamma$ be a coalition structure of $N$.
 For each $i\in N$, $\mathcal{O}(d)$ queries and $d$ edge modifications are sufficient to turn $\Gamma(i)$ into one of $i$'s favourite coalitions in the \fenhg\ represented by~$G$.
\end{lemma}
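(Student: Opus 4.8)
The plan is to first give an explicit characterisation of a player's favourite coalitions, and then observe that making $\Gamma(i)$ favourite only ever requires touching the edges incident to $i$. Since $u_i(C)=f\cdot|C\cap N_i^+|-e\cdot|C\cap N_i^-|$ with $f>0>-e$, and since neutral players can be added to or removed from a coalition without changing $u_i$, a coalition $C\ni i$ with $|C|\le c$ maximises $u_i$ exactly when it contains no enemy of $i$ and as many friends of $i$ as its size budget allows; the maximum value is $f\cdot\min(|N_i^+|,\,c-1)$ (and $f\cdot|N_i^+|$ if no coalition bound is imposed). In particular, $\Gamma(i)$ is one of $i$'s favourite coalitions as soon as $N_i^+\subseteq\Gamma(i)$ and $N_i^-\cap\Gamma(i)=\emptyset$, because then $u_i(\Gamma(i))=f\cdot|N_i^+|$ is already the maximal attainable value.

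To reach such a configuration cheaply, I would first query the adjacency list of $i$ (that is $\mathcal{O}(d)$ queries to $f_G$, revealing $N_i^+$ and $N_i^-$), and then issue a find query for $i$ and for each of its at most $d$ neighbours, again $\mathcal{O}(d)$ queries, to determine which neighbours lie in $\Gamma(i)$. Having identified them, I perform the following edge modifications: delete every enemy edge $\{i,j\}$ with $j\in\Gamma(i)$ from $E$, and delete every friend edge $\{i,j\}$ with $j\notin\Gamma(i)$ from $F$. No edge is relabelled, so each affected edge costs exactly one modification, and the total is $|N_i^-\cap\Gamma(i)|+|N_i^+\setminus\Gamma(i)|\le|N_i|\le d$. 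The query bound is $\mathcal{O}(d)$ as required.

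For correctness in the resulting game $G'$, note that by construction $i$'s friend set becomes $N_i'^+=N_i^+\cap\Gamma(i)\subseteq\Gamma(i)$ and its enemy set becomes $N_i'^-=N_i^-\setminus\Gamma(i)$, which is disjoint from $\Gamma(i)$. Hence $u_i'(\Gamma(i))=f\cdot|N_i'^+|$, whereas for any coalition $C\ni i$ (with $|C|\le c$ or without the size restriction) we have $u_i'(C)=f\cdot|C\cap N_i'^+|-e\cdot|C\cap N_i'^-|\le f\cdot|N_i'^+|$, using $|C\cap N_i'^+|\le|N_i'^+|$ and $e>0$. Thus $\Gamma(i)\succeq_i' C$ for all such $C$, i.e.\ $\Gamma(i)\in\mathrm{Fav}(i)$ in $G'$; since $|N_i'^+|\le|\Gamma(i)|-1\le c-1$, this value is attainable within the size budget whenever $|\Gamma(i)|\le c$.

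The argument is mostly bookkeeping, and the query/modification counts are immediate once the edge set to be changed is pinned down. The only conceptual point that needs care is that $\Gamma$ itself may not be altered, so the sole lever is reshaping $i$'s own friend/enemy relationships: $i$ must give up the friends that $\Gamma(i)$ cannot accommodate and make peace with the enemies it already contains. Because the preferences are symmetric, these deletions simultaneously affect the utilities of $i$'s former neighbours, but this is harmless for the lemma, which only claims that $\Gamma(i)$ becomes favourite from $i$'s own perspective; controlling the interference between many such per-player modifications at once is exactly what the subsequent testers will have to handle.
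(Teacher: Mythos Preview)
Your proof is correct and follows essentially the same approach as the paper: query $i$'s neighbourhood and their coalition membership, then delete enemy edges inside $\Gamma(i)$ and friend edges outside $\Gamma(i)$, yielding at most $d$ modifications and making $\Gamma(i)$ a favourite because it now contains all of $i$'s (new) friends and none of its (new) enemies. Your write-up is more explicit about the utility computation and the coalition-size bound than the paper's, but the underlying argument is identical.
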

\begin{proof}
 If for player~$i$, it already holds that $\Gamma(i)\in\mathrm{Fav}(i)$, no modification is required.
 Otherwise we can proceed as follows:
 Accessing the (at most~$d$) members of $N_i$ requires at most $d$ \revb{neighbourhood} queries. Moreover, we can issue one \revb{find} query each in order to determine whether a player~$j\in N_i=N_i^+\cup N_i^-$ is contained in $\Gamma(i)$.
 For each $j\in \Gamma(i)\cap N_i^-$, delete the edge $(i,j)$ from~$E$; %
 for each $j\in N_i^+\setminus\Gamma(i)$, remove the edge $(i,j)$ from~$F$.
 This requires at most $\lvert N_i^+\rvert + \lvert N_i^-\rvert\leq d$ edge modifications.
 Note that this is independent of any bound~$c$ of the coalition size.
 The obtained coalition now only contains friends of $i$ and $i$ does not have any friends outside of $\Gamma(i)$. Hence, for \revc{all preference extensions of \fenhg s}, no coalition is preferred to the current coalition $\Gamma(i)$.~\revb{\qed}
\end{proof}

\revb{The following lemma extends Lemma~\ref{lemma:witness} to multiple players. Essentially, it allows us \revnew{to} conclude that if a game is $\epsilon$-far from being stable for a stability concept that has a feasible player property, there must be roughly $\epsilon n$ witnesses.}

\begin{lemma} \label{lemma:epsilon-far}
 Let $G=(N,F\cup E) \in \mathfrak{G}_n$ be a \fenhg\ and let $\Gamma$ be a coalition structure of $N$. Furthermore, let $\gamma$ be a stability concept for which there exists a feasible player property~$\phi$.
 If there are at most $k$ witnesses, $k\cdot d$ edge modifications are sufficient to make the game stable with respect to~$\gamma$.
\end{lemma}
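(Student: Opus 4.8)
The plan is to bound the number of edge modifications by repairing each witness locally and independently. The key tool is Lemma~\ref{lemma:witness}: for any player~$i$, at most $d$ edge modifications (namely, deleting all enemy edges from $i$ inside $\Gamma(i)$ and all friend edges from $i$ going outside $\Gamma(i)$) suffice to turn $\Gamma(i)$ into one of $i$'s favourite coalitions, and by feasibility condition~\ref{feasible:reparable} this forces $\phi_i=0$. So first I would let $W\subseteq N$ be the set of witnesses, $|W|\le k$, and apply the Lemma~\ref{lemma:witness} modification to each $i\in W$ in turn. Since each witness costs at most $d$ edge modifications, the total is at most $k\cdot d$, as claimed.

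The main obstacle — and the reason the other feasibility conditions are needed — is that these repairs are not obviously non-interfering: modifying edges incident to a witness~$i$ changes the preference relations of the other endpoints of those edges, which might be non-witnesses whose $\phi$-value could in principle flip from $0$ to $1$, or other witnesses we have not yet processed. I would handle this by observing exactly which edge modifications Lemma~\ref{lemma:witness} performs from $i$'s side: every such modification is either the removal of a friend edge $(i,j)$ with $j\notin\Gamma(i)$, or the removal of an enemy edge $(i,j)$ with $j\in\Gamma(i)$. From the neighbour~$j$'s point of view, these are respectively the removal of a friend edge $(j,i)$ with $i\notin\Gamma(j)$ (using $\Gamma(i)=\Gamma(j)$ iff $j\in\Gamma(i)$, so here $i\notin\Gamma(j)$), or the removal of an enemy edge $(j,i)$ with $i\in\Gamma(j)$. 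These are precisely the two situations covered by feasibility conditions~\ref{feasible:responsive1} and~\ref{feasible:responsive2}: if $\phi_j=0$ before such a modification, it stays $0$ afterwards. Hence no non-witness ever becomes a witness, and a witness that has already been repaired (so has $\phi=0$) stays repaired.

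Putting this together: process the witnesses $i_1,\dots,i_m$ ($m\le k$) in any order. Before processing $i_t$, by induction all of $i_1,\dots,i_{t-1}$ have $\phi=0$ and every original non-witness still has $\phi=0$; the player $i_t$ may or may not still be a witness, but applying the Lemma~\ref{lemma:witness} modification makes $\Gamma(i_t)$ a favourite coalition of $i_t$, so $\phi_{i_t}=0$ by~\ref{feasible:reparable}. Each edge removed in this step is, from the other endpoint's side, of one of the two types in~\ref{feasible:responsive1}–\ref{feasible:responsive2}, so every player that had $\phi=0$ keeps $\phi=0$; in particular $i_1,\dots,i_{t-1}$ and all original non-witnesses remain fine. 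After all $m$ steps, $\phi_i=0$ for every $i\in N$, so by Definition~\ref{def:feasible-phi} the resulting game $G'$ is stable under $\Gamma$ with respect to $\gamma$. The total number of edge modifications is $\sum_{t=1}^{m} d \le k\cdot d$. I would also remark that if some $i_t$ is already non-witness by the time we reach it we can simply skip it, which only improves the bound; and that all modifications are edge \emph{deletions}, consistent with the edge-monotonicity of $\phi$ noted before the lemma.
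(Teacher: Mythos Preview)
Your proof is correct and follows essentially the same approach as the paper: repair each witness via Lemma~\ref{lemma:witness} and invoke feasibility conditions~\ref{feasible:responsive1} and~\ref{feasible:responsive2} to ensure non-witnesses stay non-witnesses. Your version is in fact more careful than the paper's, which only explicitly argues that original non-witnesses keep $\phi_j=0$; you additionally spell out the inductive step showing that already-repaired witnesses cannot be broken by later repairs, and you verify explicitly that the edge deletions performed in Lemma~\ref{lemma:witness}, viewed from the neighbour's side, are exactly of the two types covered by~\ref{feasible:responsive1} and~\ref{feasible:responsive2}.
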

\begin{proof}
 By Lemma~\ref{lemma:witness}, for each witness~$i$, $d$ edge modifications are enough to turn $\Gamma(i)$ into a favourite coalition, thus, $\phi(i)=1$ is no longer satisfied.
 For each player~$j$ that is not a witness, $\phi(j)=0$ already holds which does not change due to 
 Conditions~\ref{feasible:responsive1} and~\ref{feasible:responsive2} of Definition~\ref{def:feasible-phi}.
 If there are at most $k$ witnesses, $k\cdot d$ edge modifications are sufficient such that no player satisfies $\phi(i)$, thus, stability with respect to $\gamma$ holds.~\revb{\qed}
\end{proof}

With the help of this lemma we are now ready to prove that Algorithm~\ref{alg:verification-tester} provides a property tester for the verification problem of each stability concept with a feasible player property.

\begin{algorithm}
  \caption{}
  \label{alg:verification-tester}
  \begin{algorithmic}[1]
    \Require{access to $G=(N,F\cup E)$ and $\Gamma$ is provided by an oracle, $\phi(G,\Gamma)$ is the corresponding boolean stability function}
    \Function{VerificationTester}{$N$, $F$, $E$, \revb{$\epsilon$}}
      \State $s \gets \frac{1}{\epsilon}\ln 3$\;
      \State sample $s$ players i.i.d. from $\mathcal{D}(N)$, where $\mathcal{D}(N)$ is the uniform distribution on $N$\; \label{alg:line_sampling}
      \For{each sampled player~$i$}
        \If{$\phi_i(G,\Gamma) = 1$}
          \State \Return reject
        \EndIf
      \EndFor
      \State \Return accept
    \EndFunction
  \end{algorithmic}
\end{algorithm}

\begin{theorem}\label{thm:verification-tester}
 Let $\gamma$ be a stability concept for which there exists a feasible player property~$\phi$. It holds that Algorithm~\ref{alg:verification-tester} is a one-sided error property tester for $\Gamma$-stability verification with respect to $\gamma$.
\end{theorem}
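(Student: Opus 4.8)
The plan is to show that Algorithm~\ref{alg:verification-tester} satisfies the two requirements of Definition~\ref{defn:one-sided}: perfect completeness (graphs in the property are always accepted) and soundness (graphs $\epsilon$-far from the property are rejected with probability at least $2/3$), all within query complexity $\mathrm{poly}(\epsilon,d)$.

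\textbf{Completeness.} Suppose $\Gamma$ is stable in $G$ with respect to $\gamma$. By the defining property of $\phi$ in Definition~\ref{def:feasible-phi}, $\phi_i(G,\Gamma)=0$ for every $i\in N$. Hence no matter which $s$ players are sampled, the test in line~5 never fires, and the algorithm returns \emph{accept} with probability $1$. This gives the one-sided error guarantee.

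\textbf{Soundness.} I would prove the contrapositive: if $\mathcal{A}$ accepts with probability greater than $1/3$, then $G$ is $\epsilon$-close to being $\Gamma$-stable. The key observation is that the number of witnesses controls the distance to the property. Let $k$ be the number of players $i$ with $\phi_i(G,\Gamma)=1$. If $k \le \epsilon d n$, then by Lemma~\ref{lemma:epsilon-far} we can make the game stable with at most $k\cdot d \le \epsilon d n \cdot d$ edge modifications --- wait, this bound is off by a factor of $d$, so the right accounting is to observe that each witness needs at most $d$ modifications and we want the total to be at most $\epsilon d n$, i.e. we are $\epsilon$-close whenever $k \le \epsilon n$. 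So assume $G$ is $\epsilon$-far: then $k > \epsilon n$, meaning a uniformly random player is a witness with probability more than $\epsilon$. Sampling $s = \frac{1}{\epsilon}\ln 3$ players independently, the probability that none of them is a witness is at most $(1-\epsilon)^s \le e^{-\epsilon s} = e^{-\ln 3} = 1/3$. Therefore $\mathcal{A}$ rejects with probability at least $2/3$.

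\textbf{Query complexity.} By Condition~\ref{feasible:constant} of Definition~\ref{def:feasible-phi}, each evaluation of $\phi_i(G,\Gamma)$ costs a number of queries that depends only on $d$; the algorithm performs $s = \frac{1}{\epsilon}\ln 3$ such evaluations plus the find queries needed to locate each sampled player's coalition, for a total of $\mathrm{poly}(\epsilon,d)$ queries, independent of $n$ and of any coalition-size bound.

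The main subtlety --- and the step I would be most careful about --- is the precise translation between ``number of witnesses'' and ``edit distance,'' since the $\epsilon$-far notion counts modifications against $\epsilon d n$ while Lemma~\ref{lemma:epsilon-far} gives a bound of $k\cdot d$ modifications; one must check that $G$ being $\epsilon$-far forces $k > \epsilon n$ (equivalently $kd > \epsilon dn$), which is exactly what Lemma~\ref{lemma:epsilon-far} is set up to deliver, and that Conditions~\ref{feasible:reparable}--\ref{feasible:responsive2} are genuinely all that is used there (repairing one witness via Lemma~\ref{lemma:witness} must not create new witnesses, which is precisely guaranteed by \ref{feasible:responsive1} and \ref{feasible:responsive2}). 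Everything else is a routine Chernoff-type union bound over the sample.
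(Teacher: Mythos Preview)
Your proposal is correct and follows essentially the same approach as the paper: perfect completeness from the definition of $\phi$, soundness via Lemma~\ref{lemma:epsilon-far} (if $G$ is $\epsilon$-far then there are more than $\epsilon n$ witnesses, so $(1-\epsilon)^s\le e^{-\epsilon s}=1/3$), and constant per-sample query cost from Condition~\ref{feasible:constant}. The only cosmetic issue is the mid-proof self-correction in the soundness paragraph; in a final write-up you should state directly that $\epsilon$-far implies $k>\epsilon n$ (equivalently $kd>\epsilon dn$) by the contrapositive of Lemma~\ref{lemma:epsilon-far}, rather than narrating the false start.
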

\begin{proof}
 If $\gamma$ holds, there is no witness for non-stability, i.e., for each sampled vertex $\phi(i)=0$ holds. Therefore, the tester decides in Line~6 that $\gamma$ holds with probability~$1$.
 
 If $\Gamma$ is $\epsilon$-far from being stable with respect to $\gamma$, at least $\epsilon d n$ edge modifications are required. Thus, by Lemma~\ref{lemma:epsilon-far},
 there are at least $\nicefrac{\epsilon d n}{d}=\epsilon n$ witnesses.
 Hence, the probability that a sampled player is a witness is at least
 ${\epsilon n}\cdot\nicefrac{1}{n}=\epsilon$.
 
 Then, the algorithm correctly rejects if at least one witness is sampled, i.e., the condition in Line~5 is true. The probability of this event is $1$ minus the probability that for each sampled player the condition in Line~5 is false, i.e., $\phi_i(G,\Gamma)=0$.
 The latter probability is
 at most
  \begin{align*}
    \left(1-\epsilon\right)^s
    \leq e^{-\left(\epsilon\cdot s\right)}
    =e^{-\ln 3}=\frac{1}{3}.
  \end{align*}
 Thus, the probability that the tester correctly rejects is at least $\nicefrac 2 3$.
 
 Since $\phi$ is feasible, $\phi_i(G,\Gamma)$ can be determined in constant query time. Hence, the tester requires constant query time dependent on the applied function $\phi$.~\revb{\qed}
\end{proof}

Now it remains to show that \revc{each stability concept in Table~\ref{tab:overview_concepts}} has such a feasible player property~$\phi$. By Theorem~\ref{thm:verification-tester}, Algorithm~\ref{alg:verification-tester} is a verification tester with a query complexity depending on $\phi$. \revc{As mentioned in Section~\ref{sec:gpt}, a one-sided error tester needs to present a witness when it rejects. Therefore, the query complexity of a one-sided error tester is at least the minimum size of a witness. This is, in particular, relevant for testing stability verification of core stability, as core stability puts conditions on the preferences of \emph{all} participants of a potentially blocking coalition (see the corresponding player property in Table~\ref{tab:overview_concepts}). Therefore, we parametrize by the maximum coalition size for testing stability verification of core stability.}

\begin{theorem}\label{thm:verification-ind-stability}
 For the \fenhg\ model,
 the $\Gamma$-stability verification property can be tested with respect to
 \begin{enumerate}
  \item perfection and individual rationality with query complexity in $\mathcal{O}(\nicefrac{d}{\epsilon})$,
  \item Nash stability, individual and contractual individual stability with query complexity in $\mathcal{O}(\nicefrac{d}{\epsilon})$\opt{conference}{.}\opt{full}{,}
  \opt{full}{
   \item core stability with query complexity in $\mathcal{O}(\nicefrac{c^2\cdot d^{c^2\reva{+1}}\cdot e^c}{\epsilon})$, where $c$ is the maximum coalition size.
  }
 \end{enumerate}
\end{theorem}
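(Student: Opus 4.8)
The strategy is to reduce everything to Theorem~\ref{thm:verification-tester}: for each stability concept $\gamma$ in the list I exhibit a \emph{feasible} player property $\phi$ in the sense of Definition~\ref{def:feasible-phi}, and then the query complexity of Algorithm~\ref{alg:verification-tester} is just $s=\mathcal{O}(\nicefrac1\epsilon)$ times the number of queries needed to evaluate one value $\phi_i(G,\Gamma)$. So the whole statement breaks into: (a) writing down a natural ``witness indicator'' $\phi_i$ for each concept so that $\phi_i\equiv 0$ iff $\Gamma$ is $\gamma$-stable; (b) verifying conditions \ref{feasible:reparable}--\ref{feasible:responsive2}; and (c) bounding the evaluation cost of $\phi_i$, which then determines the complexity in parts~1--3.

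For perfection, individual rationality, Nash stability, individual stability, and contractual individual stability I would set $\phi_i(G,\Gamma)=1$ exactly when $i$ witnesses the failure of the defining condition: $u_i(\Gamma(i))<0$ for individual rationality; $u_i(\Gamma(i))<f\cdot\lvert N_i^+\rvert$ for perfection; and, for Nash/IS/CIS, the existence of some $C\in\Gamma\cup\{\emptyset\}$ with $u_i(C\cup\{i\})>u_i(\Gamma(i))$, refined by the objection clauses (which, by responsiveness, amount to ``$C$ contains no enemy of $i$'' for individual stability, and additionally ``$\Gamma(i)$ contains no friend of $i$'' for contractual individual stability). The crucial point is that every ingredient here depends only on $i$'s adjacency list, on which coalitions of $\Gamma$ contain the at most $d$ vertices of $N_i$, and on whether $\Gamma(i)$ holds a friend/enemy of $i$; all of this is obtained by one adjacency query for $i$ plus one find-query per vertex of $N_i\cup\{i\}$, giving condition~\ref{feasible:constant} with $\mathcal{O}(d)$ queries and hence the $\mathcal{O}(\nicefrac d\epsilon)$ bound in parts~1 and~2. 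Condition~\ref{feasible:reparable} holds because $\Gamma(i)\in\mathrm{Fav}(i)$ forces $u_i(\Gamma(i))$ to be maximal (all friends in, no enemies), which makes $i$ non-witnessing for each of these concepts; conditions~\ref{feasible:responsive1} and~\ref{feasible:responsive2} follow directly from Observation~\ref{lemma:responsive}, since deleting a friend edge to a vertex outside $\Gamma(i)$ or an enemy edge to a vertex inside $\Gamma(i)$ weakly raises $u_i(\Gamma(i))$, weakly lowers $u_i(C\cup\{i\})$ for the alternatives, and leaves the objection-clause flags unchanged, so a non-witness stays a non-witness.

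The main obstacle is core stability, where the natural choice $\phi_i(G,\Gamma)=1\iff i$ lies in some blocking coalition of size at most $c$ is \emph{not} obviously evaluable in boundedly many queries, because blocking coalitions need not meet $N_i$ at all. Here I would first prove a structural reduction: if $u_i(\Gamma(i))<0$ then $\{i\}$ itself blocks; and if $u_i(\Gamma(i))\ge 0$ then any blocking coalition $C\ni i$ with $\lvert C\rvert\le c$ can be replaced by the friend-graph connected component $C'$ of $i$ inside $C$, which still blocks --- because friendship is symmetric, each member of $C'$ retains all of its friends from $C$ and only loses enemies, so its utility can only increase, and $u_i(C)>u_i(\Gamma(i))\ge 0$ already forces $C$ to contain a friend of $i$, so $\lvert C'\rvert\ge 2$. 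Consequently $\phi_i$ is decided by exploring the radius-$(c-1)$ friend-ball of $i$ ($\mathrm{poly}(d^{\,c})$ vertices, each contributing $\mathcal{O}(d)$ adjacency and find queries, from which $u_v(\Gamma(v))$ is computed for every $v$ in the ball), and then checking each of the $\mathrm{poly}(d^{\,c})$ friend-connected size-$\le c$ candidate coalitions through $i$ for being blocking (each check uses the cached utilities and costs $\mathcal{O}(c)$); tallying these counts gives the stated $\mathcal{O}(c^2\cdot d^{c^2}\cdot e^{\,c})$ bound. Condition~\ref{feasible:reparable} is immediate, since $\Gamma(i)\in\mathrm{Fav}(i)$ means $i$ strictly prefers no coalition of size $\le c$, so $i$ sits in no blocking coalition; conditions~\ref{feasible:responsive1} and~\ref{feasible:responsive2} follow from the same case analysis as above --- if after deleting such an edge some $C\ni i$ blocks in $G'$, then, distinguishing whether the deleted endpoint lies in $C$, a short utility comparison shows $C$ already blocked in $G$, contradicting $\phi_i(G,\Gamma)=0$. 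Plugging these feasible properties into Theorem~\ref{thm:verification-tester} yields the three claimed query-complexity bounds.
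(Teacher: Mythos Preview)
Your proposal is correct and follows essentially the same approach as the paper: for each concept you exhibit a feasible witness indicator $\phi$ and invoke Theorem~\ref{thm:verification-tester}, with the per-player evaluation cost of $\phi_i$ determining the final bound. The one minor variation is that for core stability you reduce to \emph{friend-connected} blocking coalitions containing $i$ (arguing that the friend-component of $i$ in a blocking $C$ still blocks), whereas the paper passes to connected components in the full graph; both reductions confine the search to a radius-$(c{-}1)$ ball around $i$ and yield the same complexity.
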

\begin{proof}
 For \revc{each stability concept in Table~\ref{tab:overview_concepts}} we show that there exists a feasible player property by considering the four conditions of Definition~\ref{def:feasible-phi}. This allows us to apply Theorem~\ref{thm:verification-tester}. In each case we determine the exact query complexity of the tester.
 \begin{description}
  \item[\rmfamily\mdseries \textit{perfect:}] The corresponding player property is
    \[\phi_i(G,\Gamma)=1\iff \exists C\in\mathcal{N}_i:C\succ_i\Gamma(i).\]
  \begin{description}
	  \item[\ref{feasible:reparable}] Condition~\ref{feasible:reparable} holds by definition of perfection.
	  \item[\ref{feasible:constant}] We have $\phi_i(G,\Gamma)=1$ if and only if $\Gamma(i)$ \revc{is missing one of $i$'s friends or contains one of $i$'s enemies},
	  which can be verified in constant query time
	  by asking whether $j$ is in the same coalition as $i$ for each $j\in N_i$.
	  Therefore, Condition~\ref{feasible:constant} is met with $d$ queries per sampled player. In total, the query complexity is in $\mathcal{O}(\nicefrac{d}{\epsilon})$.
	  \item[\ref{feasible:responsive1}, \ref{feasible:responsive2}]
	  Conditions~\ref{feasible:responsive1} and~\ref{feasible:responsive2} are implied immediately by Observation~\ref{lemma:responsive}, since the relation $\Gamma(i)\succeq_i C$ remains valid in each relevant case.
  \end{description}
  
  \item[\rmfamily\mdseries \textit{individually rational:}] Here, the player property is
    \[\phi_i(G,\Gamma)=1\iff \{i\}\succ_i\Gamma(i).\]
  \begin{description}
  \item[\ref{feasible:reparable}] Condition~\ref{feasible:reparable} holds, since $\revb{\Gamma(i)} \in \mathrm{Fav}(i)$ implies that $\revb{\Gamma(i)} \succeq_i\{i\}$. \item[\ref{feasible:constant}]
  Since the decision whether $\{i\}\succ_i\Gamma(i)$ only depends on $\lvert\Gamma(i)\cap N_i\rvert$, $d$ queries are sufficient, i.e., a total query complexity in $\mathcal{O}(\nicefrac{d}{\epsilon})$, which satisfies Condition~\ref{feasible:constant}.
  \item[\ref{feasible:responsive1}, \ref{feasible:responsive2}] Again, Conditions~\ref{feasible:responsive1} and~\ref{feasible:responsive2} can be implied by Observation~\ref{lemma:responsive}.
  \end{description}
  
  \item[\rmfamily\mdseries \textit{Nash-stable:}] A witness~$i$ against Nash stability satisfies
  \[
   \phi_i(G,\Gamma)=1\iff \exists C\in \Gamma\cup\{\emptyset\}: C\cup\{i\}\succ_i\Gamma(i).
  \]
  \begin{description}
	  \item[\ref{feasible:reparable}] Condition~\ref{feasible:reparable} holds, since $\Gamma(i)\in\mathrm{Fav}(i)$ implies $\Gamma(i)\succeq_i C$; \revb{in particular, this is true if $C=C'\cup\{i\}$ for a $C'\in \Gamma\cup\{\emptyset\}$}.
	  
	  \item[\ref{feasible:constant}] We consider the following cases regarding Condition~\ref{feasible:constant}:
	  
	  \begin{description}
	   \item[\mdseries(a)] If $\Gamma(i)\cap N_i^+\neq\emptyset$, $i$ wants to deviate to a coalition~$C$ with $u_i(C \cup \{ i \}) > u_i(\Gamma(i))$. Due to the linearity of the preferences, this can only be $\{ i \}$ (with $u_i(\{ i \}) = 0$) or a coalition in $\Gamma$ that contains at least one friend. There are at most $|N_i^+|\leq d$ coalitions in $\Gamma$ that contain a friend, namely $\Gamma(j)$, $j\in N_i^+$. Hence, at most $d$ comparisons of coalitions are sufficient, which can be done with at most $d$~neighbour and~$d$ find queries by Equation~\eqref{eq:comparison}.
	   \item[\mdseries(b)] If $\Gamma(i)\cap N_i^+=\emptyset$, but $N_i^+\neq\emptyset$, the analysis is analogous to (a).
	   \item[\mdseries(c)] If $N_i^+=\emptyset$ and $\Gamma(i)\cap N_i^-=\emptyset$, $\Gamma(i)$ is already one of $i$'s favourite coalitions, hence $\phi_i(G,\Gamma)=0$.
	   \item[\mdseries(d)]
	   If $N_i^+=\emptyset$ and $\Gamma(i)\cap N_i^-\neq\emptyset$, $i$ wants to deviate to the single player coalition~$\{i\}$, hence $\phi_i(G,\Gamma)=1$.  
	  \end{description}
	   
	  It can be decided with $d$ neighbour queries which of the four cases holds for $\Gamma(i)$. The at most~$d$ coalition comparisons require at most $d$ additional find queries. Therefore, $\phi_i(G,\Gamma)$ can be decided with $\mathcal{O}(d)$ queries, satisfying Condition~\ref{feasible:constant}.
	  The total query complexity of Algorithm~\ref{alg:verification-tester} is in $\mathcal{O}(\nicefrac{d}{\epsilon})$.  
	  
	  \item[\ref{feasible:responsive1}, \ref{feasible:responsive2}] Again, Conditions~\ref{feasible:responsive1} and~\ref{feasible:responsive2} are implied by Observation~\ref{lemma:responsive}.  
  \end{description}
  
  \item[\rmfamily\mdseries \textit{individually stable:}] A witness~$i$ against individual stability satisfies
  \begin{align*}
   \phi_i(G,\Gamma)=1\iff &\exists C\in \Gamma\cup\{\emptyset\}: C\cup\{i\}\succ_i\Gamma(i)
   \ \mathrm{and}\\ &\forall j\in C: C\cup\{i\}\succeq_j C.
  \end{align*}
  \begin{description}
	  \item[\ref{feasible:reparable}] Hence, if $i$ is not a witness for a Nash deviation, it cannot be a witness here, either.
	  Therefore, Condition~\ref{feasible:reparable} holds.
	  
	  \item[\ref{feasible:constant}] \revb{Since $C\cup\{i\}\succ_i\Gamma(i)$ was already discussed for the case of Nash stability, it remains to consider the additional condition $\forall j\in C: C\cup\{i\}\succeq_j C$ that extends the player property of Nash stability.}
	  If $i$ wants to deviate, this is due to one of the cases (a), (b), or (d) \revb{from the case distinction of Nash stability} above.
	  In cases (a) and (b), we have to consider at most $|N_i^+|$ candidate coalitions, $i$ can deviate to.
	  For each neutral player $j\in N_i^0$, it holds that $\Gamma(j)\sim_j\Gamma(j)\cup\{i\}$. Thus, we only have to ask $i$'s neighbours for permission to enter the new coalition, which are in total at most~$d$. In fact, due to the symmetry of preferences, friends always welcome~$i$, and enemies never do.
	  \revb{In case (d),} $\Gamma(i)$ is not acceptable, and $i$ is always welcome in $\{i\}$.
	  
	  We obtain $\phi_i(G,\Gamma)=0$ if and only if there are no enemies in $C$, which we can decide with at most $d$ queries. 
	  Thus, we can employ the same queries as for Nash stability in order to determine $\phi_i(G,\Gamma)$, which satisfies Condition~\ref{feasible:constant}. The total query complexity of Algorithm~\ref{alg:verification-tester} is in $\mathcal{O}(\nicefrac{d}{\epsilon})$.
	  
	  \item[\ref{feasible:responsive1}] If $i$ wants to move to another coalition $C\subseteq\Gamma\cup\{\emptyset\}$ but there exists a player~$j\in C$ with $C\succ_j C\cup\{i\}$, then $j$ is $i$'s enemy due to Observation~\ref{lemma:responsive}. Therefore, deleting edges from $F$ cannot make $C \cup \{ i \}$ a feasible deviation if it was not feasible before. If $i$ is not a witness because there does not exist any preferred coalition to move to, the arguments for Nash stability can be applied. Thus, Condition~\ref{feasible:responsive1} is met.
	  
	  \item[\ref{feasible:responsive2}] If an edge $(i,j)$ is deleted from $E$ and $j\in\Gamma(i)$, Condition~\ref{feasible:responsive2} can only be false if $\Gamma(j) \succ_j \Gamma(j)\cup\{i\}$. However, $i \in \Gamma(j)$ because $\Gamma(j) = \Gamma(i)$, which is a contradiction. If $i$ is not a witness because there does not exist any preferred coalition to move to, the arguments for Nash stability can be applied. Therefore, Condition~\ref{feasible:responsive2} is met.
  \end{description}
  
  \item[\rmfamily\mdseries \textit{contractually individually stable:}] \reva{A witness~$i$ against contractually individual stability satisfies
  \begin{align*}
  	\phi_i(G,\Gamma)=1\iff
  		&\exists C\in \Gamma\cup\{\emptyset\}: C\cup\{i\}\succ_i\Gamma(i)
  	   \ \mathrm{and}\\
  		&\forall j\in C: C\cup\{i\}\succeq_j C\ \mathrm{and} \\
		&\forall j'\in \Gamma(i)\setminus\{i\}: \Gamma(i)\setminus\{i\}\succeq_{j'} \Gamma(i).
  \end{align*}
  }%
  
  \begin{description}
	  \item[\ref{feasible:reparable}] Condition~\ref{feasible:reparable} holds analogously to individual stability.
	  
	  \item[\ref{feasible:constant}]
	  Observe that for neutral players~$j'\in N_i^0\cap\Gamma(i)$ it holds that $\Gamma(i)\sim_{j'} \Gamma(i)\setminus\{i\}$ and
	  for enemies~$j'\in N_i^-\revb{\cap\Gamma(i)}$ it holds that  $\Gamma(i)\setminus\{i\}\succ_{j'}\Gamma(i)$.
	  Again, if $i$ wants to deviate to a coalition, cases~(a), (b) and~(d) from Nash stability remain.
	  In case~(a) $i$ has friends in $\Gamma(i)$ that $i$ contractually depends on. Here $\phi_i(G,\Gamma)=0$.
	  In cases~(b) and~(d) there are no friends in $\Gamma(i)$, which means there is no contractual dependence. Then, $i$ is a witness against contractual individual stability if and only if it is a witness against individual stability. Thus, in case~(d) $\phi_i(G,\Gamma)=1$ and in case~(b) that same queries as above can be applied. Thus, we need at most $\mathcal{O}(d)$ queries in order to determine $\phi_i(G,\Gamma)$. The total query complexity of Algorithm~\ref{alg:verification-tester} is in $\mathcal{O}(\nicefrac{d}{\epsilon})$, which satisfies Condition~\ref{feasible:constant}.
	  
	  \item[\ref{feasible:responsive1}, \ref{feasible:responsive2}] Conditions~\ref{feasible:responsive1} and~\ref{feasible:responsive2} hold with analogous arguments as above.
  \end{description}
  
\opt{full}{
  \item[\rmfamily\mdseries \textit{core-stable:}]
  \reva{A witness~$i$ against core-stability satisfies
  \begin{equation*}
  	\phi_i(G,\Gamma)=1\iff \exists C \in \mathcal{N}_i: \forall j \in C: C \succ_j \Gamma(j).
  \end{equation*}
  }%
  Observe that if there exists a coalition blocking~$\Gamma$, its connected components also block $\Gamma$. Hence, we can assume that $C$ is connected. It either holds that $\Gamma(i)$ is not acceptable for~$i$, and $C=\{i\}$ or $C$ is a neighbouring coalition\footnote{We call $C\subseteq\mathcal{N}_i$ a \emph{neighbouring coalition} if at least one agent $j\in C$, $j\neq i$, is a neighbour of $i$ in $N_i$ (i.e., $j$ is either a friend in $N_i^+$ or an enemy in $N_i^-$).} to $i$.
  
  \begin{description}
	  \item[\ref{feasible:reparable}] Condition~\ref{feasible:reparable} holds, since $i$ cannot strictly prefer a blocking coalition to a favourite coalition.
	  
	  \item[\ref{feasible:constant}]
	  In order to detect a connected blocking coalition~$C$ that contains~$i$, we have to query the oracle as follows:
	  We verify whether $\Gamma(i)$ is acceptable for~$i$ by considering the intersection of $i$'s neighbours $N_i$ with $\Gamma(i)$. If it is not acceptable, $C=\{i\}$ holds, and thus, $\phi_i(G,\Gamma)=1$.
	  If $\Gamma(i)$ is acceptable, $C$ can only be any of $i$'s neighbouring coalitions that are connected and of size $\lvert C\rvert \leq c$.
	  There are at most
	  \reva{
	  \[
	   \sum_{\reva{\ell}=1}^c\binom{d^c}{\reva{\ell}}\leq\sum_{\reva{\ell}=1}^c\left(\frac{d^c\cdot e}{\reva{\ell}}\right)^\reva{\ell}\leq c\cdot d^{c^2}\cdot e^c
	  \]
	  }%
	  possible such coalitions.
	  For each of these coalitions~$C$, we ask the at most $c$ contained players~$j\in C$ whether they prefer $C$ to $\Gamma(j)$ \revnew{by querying their at most $d$ neighbours}. This is sufficient to verify whether one of the coalitions blocks $\Gamma$ and, thus, whether $i$ is a witness. \revb{Therefore, the total query complexity of Algorithm~\ref{alg:verification-tester} is in
	  $\mathcal{O}(\nicefrac{c^2\cdot d^{c^2\reva{+1}}\cdot e^c}{\epsilon})$} and Condition~\ref{feasible:constant} holds.
	  
	  \item[\ref{feasible:responsive1}, \ref{feasible:responsive2}]
	  Let $\phi_i(G,\Gamma)=0$, i.e., for each neighbouring coalition~$C$ to $i$, there exists some $j\in C$ such that $\Gamma(j)\succeq_j C$.
	  Condition~\ref{feasible:responsive1} holds by the following argument.
	  Assume, an edge $(i,k)$ is removed from $F$, $k\in\Gamma(i)$. Then, it holds that:
	  If $k\neq j$ and $i\neq j$, $\phi_i(G,\Gamma)=0$ remains valid.
	  If $i=j$, $\Gamma(i)\succeq_i C$ implies that $\Gamma(i)\succeq'_i C$ by Observation~\ref{lemma:responsive}.
	  If $k=j$ and $i\neq j$, $\Gamma(k)\succeq_k C$ implies that $\Gamma(k)\succeq'_k C$ by Observation~\ref{lemma:responsive}.
	  Similarly, Condition~\ref{feasible:responsive2} is obtained by Observation~\ref{lemma:responsive}.
	  \opt{jaamas}{\revb{\qed}}
  \end{description}
}
 \opt{aamas}{\qedhere}
 \opt{arxiv}{\qedhere}
 \end{description}

\end{proof}

\subsection{Testing Existence Problems}
\label{sec:existence}
Now we prove Theorem~\ref{mainthm:existence}.
In general, there always exists an individually rational
coalition structure \revb{(all coalitions are singletons)}.
\citet{bog-jac:j:stability-hedonic-coalition-structures} show that in symmetric additively separable hedonic games there always exists a Nash-stable coalition structure. Note that in our model the argument that if a player deviates, the social welfare increases, remains valid, even for bounded coalition size. \reva{Their result and its proof extend easily to the case of bounded coalitions sizes, and a sketch of the proof is provided for the sake of referencing to parts of it later.}
\begin{lemma}[\citep{bog-jac:j:stability-hedonic-coalition-structures}]
	\label{thm:always_nash_stable}
	Each symmetric \fenhg\ (\revc{such as all preference extensions of \fenhg s}) allows a Nash-stable, and consequently individually stable and contractually individually stable coalition structure.
\end{lemma}
\begin{proof}[sketch]
	Let $\Gamma$ be a coalition structure containing coalitions $\Gamma(i)$ and $C\subseteq N$ with $\lvert C\rvert\leq c-1$. We assume that $C\cup \{i\}\succ_i\Gamma(i)$.
	Moreover, let $\Gamma'$ be the coalition structure obtained if $i$ deviates to $C$, i.e., $\Gamma'$ contains $\Gamma(i)\setminus\{i\}$ and $C\cup\{i\}$.
	The social welfare $\mathrm{SW}(\Gamma)$ of a coalition structure~$\Gamma$ is the sum of all players' utilities of their current coalition. We observe that the difference of the social welfare of $\Gamma'$ and $\Gamma$ always increases, which means that there exists a local maximum resulting in a Nash-stable coalition structure.
	It holds that the difference $\mathrm{SW}(\Gamma')-\mathrm{SW}(\Gamma)$ equals
	\begin{align}
	&\sum_{j\in N}\left(u_j(\Gamma'(j))-u_j(\Gamma(j))\right) \notag \\
	=\,&\underbrace{u_i(C\cup\{i\})-u_i(\Gamma(i))}_{>0}
	+\sum_{j\in C\cap N_i}\left(u_j(C\cup\{i\})-u_j(C)\right) \notag
	\\&
	+\sum_{j\in \Gamma(i)\cap N_i}\left(u_j(\Gamma(i)\setminus\{i\})-u_j(\Gamma(i))\right) %
	\notag \\
	=\,&\underbrace{u_i(C\cup\{i\})-u_i(\Gamma(i))}_{>0}
	+f\cdot\lvert C\cap N_i^+\rvert-e\cdot\lvert C\cap N_i^+\rvert \label{eq:bogo-needs-symmetry}
	\\
	&-f\cdot\lvert \Gamma(i)\cap N_i^+\rvert+e\cdot\lvert \Gamma(i)\cap N_i^+\rvert
	> 0 \notag
	\end{align}
	Hence, there always exists a Nash-stable coalition structure, even if the coalition size is bounded. Since Nash stability implies individual and contractual individual stability, they are guaranteed to exist as well. \revb{\qed}
\end{proof}

There does not necessarily exist a perfect coalition structure. For example, there does not exist any perfect coalition structure for $G=(\{1,2,3\},F \cup E)$, where $F=((1,2), (2,3))$, $E=(3,1)$. On the other hand, if $F = \emptyset$, the utility of any coalition structure is at most $0$, so singleton coalitions are perfect; if $E = \emptyset$, there exists a perfect coalition structure if and only if no connected component is larger than $c$. For the general case $\lvert E \rvert, \lvert F \rvert \geq 0$, we show that there exists a tester with one-sided error.

\begin{theorem}\label{thm:perfection-bounded}
 There is a one-sided error property tester
 with constant query complexity
 for the existence of a perfect coalition structure
 in the \fenhg\ model with a constant coalition size bound $c$.
\end{theorem}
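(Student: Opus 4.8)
The plan is to show that admitting a perfect coalition structure with coalition-size bound $c$ is a \emph{local} property of the labelled bounded-degree graph: I would describe a finite family $\mathcal{B}=\mathcal{B}(c,d)$ of labelled balls of radius $O(c)$ such that $G$ admits a perfect coalition structure if and only if no vertex of $G$ is the centre of an isomorphic copy of a member of $\mathcal{B}$. Given this, the tester is the standard sampling tester: sample $\Theta(d^{O(c)}/\epsilon)$ vertices uniformly at random, explore the radius-$O(c)$ neighbourhood of each (at most $d^{O(c)}$ queries), and reject if and only if one of them is the centre of a forbidden ball; as in the verification testers, one-sided error and the query bound then follow almost for free.

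First I would reduce to the friendship graph $F$. Splitting a coalition along the connected components of $F$ never makes any player worse off, since all of a player's friends lie in its own $F$-component while splitting can only remove enemies from a coalition; hence $G$ admits a perfect coalition structure if and only if every connected component of $F$ does (taking into account only the enemy edges inside it), and the components may be treated separately. Next I would analyse a single $F$-component. Because the coalition size is bounded by $c$, player $i$'s favourite coalitions have utility $f\cdot\min(|N_i^+|,c-1)$ and are exactly the coalitions of size at most $c$ that contain $i$, at least $\min(|N_i^+|,c-1)$ of $i$'s friends, and none of $i$'s enemies. Call $i$ \emph{tight} if $|N_i^+|\le c-1$ --- then its coalition must contain \emph{all} of its friends --- and \emph{clique-type} otherwise --- then its coalition consists of $i$ together with exactly $c-1$ of its friends, hence is an enemy-free $c$-clique of $F$. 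A tight vertex forces its entire $F$-neighbourhood into its coalition; iterating this through tight vertices gives, for each vertex $i$, a \emph{forced block}, which one computes by a breadth-first search that stops as soon as more than $c$ vertices have been reached. If some forced block has more than $c$ vertices, or contains an enemy edge, or contains a clique-type vertex that fails to receive $c-1$ friends inside it, then no perfect coalition structure exists; each such situation is exhibited inside a ball of radius $O(c)$ and is placed into $\mathcal{B}$. The clique-type vertices lying in no tight vertex's forced block must be covered by vertex-disjoint enemy-free $c$-cliques of $F$; a clique-type vertex lying in no enemy-free $c$-clique at all, and a bounded-size $F$-component consisting only of such vertices that admits no exact cover by enemy-free $c$-cliques, are the remaining members of $\mathcal{B}$. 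The non-trivial direction is the converse: if no vertex is the centre of a forbidden ball, then a perfect coalition structure exists, obtained by declaring each forced block a coalition and then covering the remaining clique-type components greedily, the point being that every obstruction to carrying this out is one of the balls already collected in $\mathcal{B}$.

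Given the characterisation, one-sided error is immediate, since if $G$ admits a perfect coalition structure then no forbidden ball occurs. For soundness, suppose at most $\epsilon' n$ vertices are centres of forbidden balls, where $\epsilon' = \epsilon d / d^{O(c)}$, and delete every edge lying within the radius-$O(c)$ neighbourhood of such a vertex (if these deletions uncover a new forbidden ball just outside a cleared region, enlarge the cleared regions by a further additive constant and repeat; this terminates after $O(1)$ rounds). This removes at most $\epsilon' n\cdot d^{O(c)}\le\epsilon d n$ edges and yields a graph with no forbidden ball --- deleting edges only shrinks $F$-neighbourhoods, forced blocks, and clique requirements --- which, by the characterisation, admits a perfect coalition structure; hence $G$ is $\epsilon$-close. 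Contrapositively, if $G$ is $\epsilon$-far then more than $\epsilon' n$ vertices are centres of forbidden balls, and the random sample hits one with probability at least $2/3$. The query complexity is $O(1/\epsilon')\cdot d^{O(c)}=\mathrm{poly}(\epsilon,c,d)$, matching Theorem~\ref{mainthm:existence} and independent of $n$.

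The main obstacle is the local characterisation, and within it the two places where clique-type (high-friend-count) vertices appear. First, proving \emph{sufficiency} of forbidden-ball-freeness: the greedy assembly has to reconcile the forced blocks with an exact cover of the remaining clique-type vertices by enemy-free $c$-cliques, and one must argue that every obstruction to this is captured by a ball of bounded radius --- a priori an exact-cover obstruction could stretch over an unbounded region, but such a region turns out to be repairable by a number of edge deletions small relative to its size (as already illustrated by long paths and cycles in $F$, which are always $\epsilon$-close), hence harmless. Second, controlling the boundary of the region cleared in the soundness argument, so that clearing one obstruction does not create a fresh one nearby; this is handled by the iterative enlargement above. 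The remaining pieces --- the sampling and union bound, and the degenerate case $c>d$, in which no vertex is clique-type and the characterisation collapses to ``every $F$-component has at most $c$ vertices and contains no enemy edge'' --- are routine.
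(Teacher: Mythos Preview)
The paper's proof is far simpler than your plan: it asserts that a favourite coalition of $v$ must contain \emph{all} of $v$'s friends and none of its enemies, so the $F$-components are forced to be the coalitions of any perfect structure, and the tester merely samples vertices, BFS's along friend edges, and rejects if a component exceeds size $c$ or spans an enemy edge. Soundness is then argued via edge-monotonicity of the property. This is exactly what you call the ``degenerate case $c>d$'', in which every vertex is tight; your forced-block and exact-cover machinery is an attempt to cover the regime $c\le d$, which the paper's argument does not treat separately.

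That attempt, however, has a genuine gap. You open by promising a finite family $\mathcal{B}$ and an if-and-only-if characterisation (``$G$ admits a perfect coalition structure iff no vertex is the centre of a member of $\mathcal{B}$''), but your own remark about long paths and cycles already refutes it: for $c=2$, an odd $F$-cycle has no perfect structure while every bounded-radius ball is a path and carries no local obstruction. You then quietly downgrade the claim to ``such unbounded obstructions are repairable by few deletions relative to their size, hence harmless'', but this is precisely the crux and you do not prove it. Concretely, your soundness step needs that every forbidden-ball-free graph is $\delta$-close to admitting a perfect coalition structure, with $\delta$ uniform and small enough to be absorbed into the $\epsilon$-budget; for $c=2$ this is already a non-trivial distance-to-perfect-matching statement in bounded-degree graphs, and for larger $c$ it is distance-to-exact-cover-by-$c$-cliques. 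Neither follows from the greedy assembly you sketch, and your ``iterative enlargement'' handles only boundary artefacts of cleared neighbourhoods, not obstructions spread across a large component. Relatedly, the property is \emph{not} edge-monotone once clique-type vertices are present --- deleting a friend edge $(a,b)$ can turn a clique-type $a$ into a tight vertex whose remaining friends are no longer all in $\Gamma(a)$ --- so your line ``deleting edges only shrinks \ldots\ clique requirements'' does not hold in the very regime you are trying to cover.
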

\begin{proof}
 Let $v \in N$, and observe that $C$ is a favourite coalition of $v$ if and only if $C \cap N^+_v = N^+_v$ and $C \cap N^-_v = \emptyset$. It follows that there exists a perfect coalition structure $\Gamma$ if and only if there does not exist any edge in $E$ between vertices of the same connected component of $G[N_F]$, where $N_F$ is the set of endpoints in $F$, i.e., $N_F = \{u \mid (u,v) \in F \}$. This suggests the following algorithm: first, sample a set $S$ of $\lvert S \rvert = \nicefrac{1}{\epsilon} \ln 3$ vertices at random. \revb{For each $v \in S$, we run a BFS that follows only edges in $F$. If, at any point during its execution, the BFS has explored more than $c$ vertices or has visited two endpoints \reva{$u,w$ of the same edge $(u,w) \in E$}, the tester rejects.\footnote{\revc{We note that the algorithm does not wrongly detect infeasible blocking coalitions when rejecting after exploring $c$ vertices. A necessary condition for a coalition structure to be perfect is that for every connected component $H$ in $(N,F)$, all vertices of $H$ must be in the same coalition. If $H$ is larger than $c$, it follows that there is no coalition structure with bounded coalition size $c$ such that every player is in one of their favourite coalitions. In other words, $H$ is a witness.}} Otherwise, i.e., if none of these two conditions is met during the BFS for any $v \in S$, the tester accepts the graph.}
 
 By the above observation, every path in $G$ that contains only edges from $F$ must be in the same coalition in a perfect coalition structure. The algorithm rejects only when it finds a path $P$ such that for every coalition structure $\Gamma$ such that some coalition $C \in \Gamma$ contains $P$, $C$ also contains $(u,v) \in E$, which is a witness against the existence of a perfect coalition structure.
 
 If $G$ is $\epsilon$-far from having a perfect coalition structure, then at least $\epsilon d n$ edges in $F \cup E$ have to be \revc{\emph{removed}} in order to make $G$ have a perfect coalition structure because having a perfect coalition structure is an edge-monotone property \revb{(i.e., adding edges never decreases the graph's distance to the property, see also Lemma~\ref{lemma:epsilon-far})}. Let $R$ be a minimal set of edges that have to be removed. Since every vertex is incident to at most $d$ other vertices, at least $2 \lvert R \rvert / d > 2 \epsilon d n / d > \epsilon n$ vertices must be incident to an edge from $\lvert R \rvert$. If a vertex that is incident to an edge in $R$ is in $S$, the algorithm finds a witness. The probability that none of the vertices in $S$ is incident to an edge in $R$ is at most
 \begin{equation}
 	\left(1-\frac{\epsilon n}{n}\right)^{\frac{1}{\epsilon} \ln 3} \leq \frac{1}{3}. \label{eq:perfection_bound_ineq}
 \end{equation}
 As argued above, if a vertex in $S$ is incident to an edge from $R$, the tester finds $(u,v)$ and rejects. \revb{\qed}
\end{proof}

\subsection{Extensions to Weighted and Directed Encodings}
\label{sec:extensions}

\paragraph{Weighted, Undirected \revc{FEN-Encodings}.}
The resulting games are equivalent to \emph{additively separable} hedonic games~\citep{bog-jac:j:stability-hedonic-coalition-structures}. If each edge contributes equally to the edit distance \revb{(that underlies the definition of $\epsilon$-far)}, this does not affect our proofs and Lemma~\ref{thm:always_nash_stable} because they rely on the linearity of the utility function only. If an edge contributes proportional to its weight, we can use the following standard techniques from property testing. \revb{The first option is} to require that the weights are bounded by some value $W$ so that we can simply increase the sampling size in our algorithms by a factor of $W$. To see why this works, imagine an edge with weight $w$ as $w$ parallel edges, which essentially increases the bound on the vertex degrees to $W \cdot d$. On the other hand, unbounded weights cannot be handled in the standard model by constant-query testers because a single edge that has weight $2 \epsilon d n$ can make a graph $\epsilon$-far, yet it is very unlikely to find this edge by sampling $O(1)$ vertices uniformly. Therefore, another option is to allow vertex sampling proportional to the weights of incident edges or, equivalently\footnote{\revb{The following way to convert from one into the other is known:} (i) Sample a vertex proportional to its weight and pick an incident edge with probability proportional to the edge's weight relative to the weight of all incident edges, (ii) sample an edge proportional to its weight, and pick an incident vertex uniformly at random.}, sampling edges proportionally to their weight.

\opt{full}{
	We describe the modifications that are required to make our algorithms work with edge-weighted graphs and weight-proportional vertex sampling. Let $G = (N,F \cup E)$ be a graph with edge weights $w : F \rightarrow \ (0, \infty)$ and $w : E \rightarrow \ (-\infty, 0)$, and define the weight of a vertex $v$ as $w(v) = \sum_{e \in N_v} \lvert w(e) \rvert$. For the sake of simplicity, we write $w : F \cup E \rightarrow (0, \infty) \cup (-\infty, 0)$ in the following.
	
	\revc{
	\begin{definition}[weight-aware graph distance]
		\label{defn:weight-dist}
		A graph $G = (N, F \cup E)$ with weights $w : F \cup E \rightarrow (0, \infty) \cup (-\infty, 0)$ is $\epsilon$-far from an edge-monotone\footnote{All properties that we consider are edge-monotone. Note that it is not clear how to define $\epsilon$-far in general for properties that are not edge-monotone because one would need to assign weights to added edges. To minimize the distance of $G$ to $\mathcal{P}$, one could always choose the smallest weight, i.e., $1$, but this might break the semantics of the weights.} property $\mathcal{P}$ if there is no set of edges $R \subseteq F \cup E$ with weight $w(R) := \sum_{e \in R} \lvert w(e) \rvert \leq \epsilon \cdot \lvert w(F \cup E) \rvert$ such that the graph $G' = (V, (F \cup E) \setminus R)$ is in $\mathcal{P}$.
	\end{definition}
	
	\begin{definition}[weight-proportional distribution]
		\label{defn:weight-proportional}
		Given a set $S$ that is weighted by a function $w : S \rightarrow (0, \infty)$, the weight-proportional distribution $\mathcal{W}_w(S)$ assigns probability mass $w(x) / w(S)$ to $x \in S$, i.e., if $X \sim \mathcal{W}_w(S)$ is a random variable, then $\Pr[X = s] = w(x) / \sum_{y \in S} w(y)$.
	\end{definition}
	
	\begin{definition}[weight-aware tester]
		\label{def:weight-tester}
		A wait-aware one-sided error tester for graphs $G = (N, F \cup E)$ with edge weights $w : F \cup E \rightarrow (0, \infty) \cup (-\infty, 0)$ is a one-sided error tester (see Definition~\ref{defn:one-sided}) that has oracle access to i.i.d. samples from the distribution $\mathcal{W}_w(N)$.
	\end{definition}
	}
	
	\paragraph{Weighted Verification Problems.}
	Lemma~\ref{lemma:witness} and Lemma~\ref{lemma:epsilon-far} hold without modifications for edge-weighted graphs. We modify Algorithm~\ref{alg:verification-tester} as follows: in Line~\ref{alg:line_sampling}, we sample $s$ players from $N$ proportional to the sum of weights of their incident edges, i.e., the sum of weights of the relations to their friends and enemies. \revc{The resulting algorithm is given by Algorithm~\ref{alg:weighted-verification-tester}.}
	
	\begin{algorithm}
	  \caption{}
	  \label{alg:weighted-verification-tester}
	  \revc{
	  \begin{algorithmic}[1]
	    \Require{access to $G=(N,F\cup E)$, $w : F \cup E \rightarrow (0, \infty) \cup (-\infty, 0)$ and $\Gamma$ is provided by an oracle, $\phi(G,\Gamma)$ is the corresponding boolean stability function}
	    \Function{VerificationWeightAwareTester}{$N$, $F$, $E$, $w$, \revb{$\epsilon$}}
	      \State $s \gets \frac{1}{\epsilon}\ln 3$
	      \State sample $s$ players i.i.d. from $\mathcal{W}_w(N)$
	      \For{each sampled player~$i$}
	        \If{$\phi_i(G,\Gamma) = 1$}
	          \State \Return reject
	        \EndIf
	      \EndFor
	      \State \Return accept
	    \EndFunction
	  \end{algorithmic}
	  }
	\end{algorithm}
	
	\begin{theorem}[Theorem~\ref{thm:verification-tester}, weighted version]
		\label{thm:weighted-verification-tester}
		Let $\gamma$ be a stability concept for which there exists a feasible player property~$\phi$. It holds that \revc{Algorithm~\ref{alg:weighted-verification-tester}} is a \revc{weight-aware} one-sided error property tester for $\Gamma$-stability verification with respect to $\gamma$.
	\end{theorem}
	\begin{proof}
		The modified algorithm still accepts all stable coalitions $\Gamma$. If~$\Gamma$ is $\epsilon$-far from being stable with respect to $\gamma$, every edge set $R \subseteq F \cup E$ whose removal stabilizes $\Gamma$ has weight at least $\epsilon \sum_{e \in F \cup E} w(F \cup E)$ by the weight-aware definition of $\epsilon$-far from above. Hence, the probability that a sampled player is a witness is at least
		\begin{equation*}
			\frac{\sum_{v \in \{x, y \mid (x,y) \in R \}} \lvert w(v) \rvert}{w(V)}
			\geq \frac{2 \epsilon \sum_{e \in F \cup E} w(F \cup E)}{2 \sum_{e \in F \cup E} w(F \cup E)}
			= \epsilon.
		\end{equation*}
		The proof continues as the proof of Theorem~\ref{thm:verification-tester}. \revb{\qed}
	\end{proof}
	
	\paragraph{Weighted Existence Problems.}
	Lemma~\ref{thm:always_nash_stable} still holds for symmetric, additively separable hedonic games. The tester for the existence of a perfect coalition structure can be generalized as follows.
	
	\begin{theorem}[Theorem~\ref{thm:perfection-bounded}, weighted version]
		\label{thm:weighted-perfection-bounded}
		There is a \revc{weight-aware} one-sided error tester with constant query complexity for the existence of a perfect coalition structure in the \fenhg\ model with a constant coalition size bound $c$.
	\end{theorem}
	\begin{proof}
		Let $G = (N, F \cup E)$ be the input graph. Consider the following algorithm: first, sample a set $S$ of $\lvert S \rvert = \nicefrac{1}{\epsilon} \ln 3$ vertices from $\mathcal{W}_w(N)$. \revb{For each $v \in S$, we run a BFS that follows only edges in $F$. If, at any point during its execution, the BFS has explored more than $c$ vertices or has visited two endpoints \reva{$u,w$ of the same edge $(u,w) \in E$}, the tester rejects. Otherwise, i.e., if none of these two conditions is met during the BFS for any $v \in S$, the tester accepts the graph.}
		
		As in the proof of Theorem~\ref{thm:perfection-bounded}, we can argue that the algorithm never rejects a graph that has a perfect coalition. Therefore, let $G$ be $\epsilon$-far from having a perfect coalition, let $R \subseteq F \cup E$ be such that $(N, (F \cup E) \setminus R)$ is in $\mathcal{P}$ and $w(R)$ is minimal among all such sets. We only need to prove that a properly adapted version of Inequality~\ref{eq:perfection_bound_ineq} holds. Observe that $\sum_{v \in \{x, y \mid (x,y) \in R \}} \lvert w(v) \rvert \geq 2 \epsilon \cdot w(F \cup E)$ by the weight-aware definition of $\epsilon$-far \revc{(see Definition~\ref{defn:weight-dist})}. Therefore, the probability that none of the vertices in $S$ is incident to an edge in $R$ is at most
		\begin{equation*}
			\left(1-\frac{\sum_{v \in \{x, y \mid (x,y) \in R \}} \lvert w(v) \rvert}{w(N)}\right)^{\frac{1}{\epsilon} \ln 3}
			\leq \left(1-\frac{2 \epsilon \cdot w(F \cup E)}{2 \cdot w(F \cup E)}\right)^{\frac{1}{\epsilon} \ln 3}
			\leq \frac{1}{3}. \tag*{\revb{\qed}}
		\end{equation*}
	\end{proof}
}

\paragraph{Directed, Unweighted \revc{FEN-Encodings}.}
\opt{full}{In this case, friendship relations are no longer necessarily symmetric. For instance, this can be the case in social networks with a follower system.}
In property testing, there are two different models of directed graphs.

In the \revb{\emph{bidirectional model}}, we may see all incoming and outgoing edges\opt{full}{ (followed players and followers can be queried)}. Assuming bounded out-degree, verification is not affected by the bidirectional model.

In the \revb{\emph{unidirectional model}}, we may see only one type - usually outgoing edges\opt{full}{ (only followed players can be queried)}. For the unidirectional model with outgoing edges,
\opt{full}{
the verification problems remain testable.
While being a witness against perfection, individual rationality, or Nash stability only depends on players' own preferences,}
the query complexity for 
individual stability and contractually individual stability depends on (the minimum of) the in-degree or the maximum coalition size: the case analysis changes because we are required to evaluate the preferences of the other members of the (still at most $O(d)$) affected coalitions.

\opt{full}{%
Although there is a lossy but still sublinear transformation from constant-query testers in the bidirectional model into the unidirectional model~\cite{CzuRel16}, the unidirectional model is often much harder to analyse and yields testers with worse query complexity. In detail, we obtain the following results.

\paragraph{Directed Verification Problems.}
Observation~\ref{lemma:responsive}, Definition~\ref{def:feasible-phi}, and Lemma~\ref{lemma:witness} are stated from a single player's perspective. Hence they remain valid \revc{assuming the standard notation that $(i,j)$ represents an arc from $i$ to $j$}. Lemma~\ref{lemma:epsilon-far}, Algorithm~\ref{alg:verification-tester}, and Theorem~\ref{thm:verification-tester} are unaffected by the generalisation. In fact the changes only depend on the player properties of the stability notions.
\begin{theorem}[Theorem~\ref{thm:verification-ind-stability}, directed, unidirectional version]
 \label{thm:verification-directed}
 For the \fenhg\ model,
 the $\Gamma$-stability verification property can be tested with respect to
 \begin{enumerate}
  \item perfection, individual rationality, and Nash stability with query complexity in $\mathcal{O}(\nicefrac{d}{\epsilon})$,
  \item individual and contractual individual stability with query complexity in $\mathcal{O}(\nicefrac{d(d+c)}{\epsilon})$,
  \item core stability with query complexity in $\mathcal{O}(\nicefrac{c^2\cdot d^{c^2\revnew{+1}}\cdot e^c}{\epsilon})$, where $c$ is the maximum coalition size.
 
 \end{enumerate}
\end{theorem}
\begin{proof}
  Conditions~\ref{feasible:reparable}, \ref{feasible:responsive1}, and~\ref{feasible:responsive2}
  remain valid for the unidirectional case.
  For perfection, individual rationality, and Nash stability Condition~\ref{feasible:constant} still holds with the same query complexity, since only queries for outgoing edges are made for each potential witness.
  
  For individual stability, there are still at most $\mathcal{O}(d)$ possible coalitions, a sampled player~$i$ might want to deviate to. Whenever player~$i$ wants to deviate to a new coalition~$C$, now each player~$j$ in~$C$ needs to be asked whether $(j,i)$ is an edge in $E$. If, for all $j\in C$, that is not the case, $i$ is a witness. This requires $\mathcal{O}(d(d+c))$ queries, which satisfies Condition~\ref{feasible:constant}. The total complexity of Algorithm~\ref{alg:verification-tester} is in $\mathcal{O}(\nicefrac{d(d+c)}{\epsilon})$.
  
  Similarly, for contractually individual stability, a player~$k$ only lets $i$ move to a new coalition, if $(k,i)$ is not an edge in $F$. If for each $k\in\Gamma(i)$ that is the case, $i$ is a witness.
  
  For core stability, we already employ all necessary queries to new coalition members in the proof of the undirected case. Therefore, the query complexity remains the same.\revb{\qed}
\end{proof}

\paragraph{Directed Existence Problems.}
}

Lemma~\ref{thm:always_nash_stable} is not true anymore in general (\revb{note that Equation~\eqref{eq:bogo-needs-symmetry} does not hold in general due to asymmetry}; see~\cite{bog-jac:j:stability-hedonic-coalition-structures} for an example). 
A perfect coalition structure exists if the weakly connected components induced by $F$ do not induce any edges from $E$. Exploring weakly connected components causes no problem in the bidirectional model, but in the unidirectional model it can render exploration almost impossible. For example, consider a cycle of length $c$ with alternating edge direction and all but one edge being friend edges. This graph makes it impossible to form perfect coalitions, and having $\epsilon d n$ copies of it as subgraphs makes a graph $\epsilon$-far from admitting a perfect coalition structure.
\opt{conference}{%
	For example, consider a cycle of length $c$ with alternating edge direction and all but one edge being friend edges. This graph makes it impossible to form perfect coalitions, and having $\epsilon d n$ copies of it as subgraphs makes a graph $\epsilon$-far from admitting a perfect coalition structure.
}%
\opt{full}{%
In fact, we show that testing the existence of a perfect coalition structure requires almost linear query complexity.
\begin{theorem}
	\label{thm:perfection-bound-unidirectional}
	Let $\epsilon > 0$ and $c \geq \nicefrac{1}{\epsilon d}$. There is no one-sided error $\epsilon$-tester for the existence of a perfect coalition structure with coalition size bound~$c$ in the unidirectional model that uses \reva{$o(n^{1 - 2 \epsilon d})$} queries.
\end{theorem}
\begin{proof}
	\revm{
	For all even $\ell \geq 6$, we define an \emph{alternating-cycle gadget} $H_{\ell}=(V_{\ell},F_{\ell} \cup E_{\ell})$, where $V_{\ell} = \{ 0, \ldots, \ell-1 \}$, $F_{\ell} = \{ (i, i-1) \mid i \in \{1, 3, \ldots, \ell-1 \} \} \cup \{ (i, i+1) \mid i \in \{1, 3, \ldots, \ell-3 \} \}, E_{\ell} = \{(\ell-1, 0)\}$ (see Fig.~\ref{fig:alternating_cycle}). Let $n \geq \ell$; we construct the graph $G_{\ell}$
	\revnew{as the union of $\lfloor \nicefrac{n}{\ell} \rfloor$ copies of $H_{\ell}$} and $n - \ell \cdot \lfloor \nicefrac{n}{\ell} \rfloor$ isolated vertices. Without loss of generality, to \revnew{simplify} the argument from now on, we assume that $n$ is divisible by $\ell$ and that $\ell$ is even. The graph $G_{\ell}$ is $\epsilon$-far from having a perfect coalition structure if $\nicefrac{n}{\ell} > \epsilon d n \Leftrightarrow \ell < \nicefrac{1}{\epsilon d}$.
	
	\begin{figure}
		\centering
\begingroup%
  \makeatletter%
  \providecommand\color[2][]{%
    \errmessage{(Inkscape) Color is used for the text in Inkscape, but the package 'color.sty' is not loaded}%
    \renewcommand\color[2][]{}%
  }%
  \providecommand\transparent[1]{%
    \errmessage{(Inkscape) Transparency is used (non-zero) for the text in Inkscape, but the package 'transparent.sty' is not loaded}%
    \renewcommand\transparent[1]{}%
  }%
  \providecommand\rotatebox[2]{#2}%
  \newcommand*\fsize{\dimexpr\f@size pt\relax}%
  \newcommand*\lineheight[1]{\fontsize{\fsize}{#1\fsize}\selectfont}%
  \ifx\svgwidth\undefined%
    \setlength{\unitlength}{96.84282798bp}%
    \ifx\svgscale\undefined%
      \relax%
    \else%
      \setlength{\unitlength}{\unitlength * \real{\svgscale}}%
    \fi%
  \else%
    \setlength{\unitlength}{\svgwidth}%
  \fi%
  \global\let\svgwidth\undefined%
  \global\let\svgscale\undefined%
  \makeatother%
  \begin{picture}(1,0.88173807)%
    \lineheight{1}%
    \setlength\tabcolsep{0pt}%
    \put(0,0){\includegraphics[width=\unitlength,page=1]{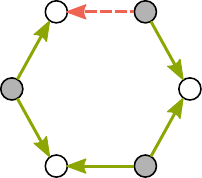}}%
    \put(0.69847207,0.79650019){\makebox(0,0)[lt]{\lineheight{1.25}\smash{\begin{tabular}[t]{l}$5$\end{tabular}}}}%
    \put(0.91762646,0.41340491){\makebox(0,0)[lt]{\lineheight{1.25}\smash{\begin{tabular}[t]{l}$4$\end{tabular}}}}%
    \put(0.69769939,0.03063077){\makebox(0,0)[lt]{\lineheight{1.25}\smash{\begin{tabular}[t]{l}$3$\end{tabular}}}}%
    \put(0.25397796,0.03120425){\makebox(0,0)[lt]{\lineheight{1.25}\smash{\begin{tabular}[t]{l}$2$\end{tabular}}}}%
    \put(0.03603763,0.41383681){\makebox(0,0)[lt]{\lineheight{1.25}\smash{\begin{tabular}[t]{l}$1$\end{tabular}}}}%
    \put(0.25731621,0.79705379){\makebox(0,0)[lt]{\lineheight{1.25}\smash{\begin{tabular}[t]{l}$0$\end{tabular}}}}%
  \end{picture}%
\endgroup%

		\caption{\label{fig:alternating_cycle} \revb{The alternating-cycle gadget $H_6$. Observe that the \revnew{odd} (grey) vertices can only be found by directly hitting them as there are no incoming edges at \revnew{odd} vertices that can be followed.}}
	\end{figure}
	
	A one-sided error tester $\mathcal{T}$ needs to sample all odd vertices of an alternating-cycle gadget to certify that $G_{\ell}$ does not have a perfect coalition structure. Without loss of generality, we may assume that whenever $\mathcal{T}$ queries a vertex~$v$, it gets \revnew{all vertices reachable from} $v$ in return (not just the $i^{th}$ neighbour of $v$ it queried for).
	This modification makes the queries that~$\mathcal{T}$ can issue only stronger but since \revnew{at most $2$ vertices are reachable from $v$}, this reduces the query complexity of the tester by at most a factor $\nicefrac{1}{3}$. Therefore, whenever $\mathcal{T}$ queries for a vertex $v$, the answer is either an \revnew{isolated vertex $\{ v \}$} if $v$ is an even vertex, or \revnew{a triple $\{u, v, w\}$} if $v$ is an odd vertex. Note that due to the strengthening of queries, querying $u$ or $w$ after querying $v$ does not reveal more information to $\mathcal{T}$. \revnew{We use Yao's principle and prove the theorem by lower-bounding the query complexity of any deterministic algorithm on the following input distribution. We generate all feasible vertex labelings of $G_{\ell}$ (isomorphic copies of $G_\ell$) and construct the uniform distribution $\mathcal{D}$ over these graphs. To simplify the analysis, we may assume that the labels are \emph{lazily} assigned to an initially unlabeled version of $G_\ell$, i.e., when the vertex corresponding to a label is queried by or returned to $\mathcal{T}$ for the first time. It is known that the resulting random graph is drawn from $\mathcal{D}$ \cite[Proof of Claim 5.5.2]{goldreich2011proximity}.}
	
	Consider running a tester $\mathcal{T}$ for perfection on a graph $G$ sampled from~$\mathcal{D}$. \revnew{Let $s$ be an upper bound on the query complexity of $\mathcal{T}$, and without loss of generality, assume that $s \leq \nicefrac{n}{6}$.} Let $S = (t_1, \ldots, t_s)$ be the sequence of all vertices from $V$ that are queried by~$\mathcal{T}$. Fix an arbitrary alternating-cycle gadget $H$ in $G$. \revnew{Intuitively, to discover all odd vertices of $H$, an odd vertex must be revealed at $\ell / 2$ positions of $S$. There are $\binom{s}{\ell/2}$ ways to choose these positions. The probability to discover an odd vertex is roughly $\nicefrac{\ell/2}{n}$, but since the set of unknown vertices is reduced by every query and the number of unknown \emph{odd} vertices from $H$ decreases, we take slightly more care.}
	
	Formally, for $j \geq 1$, consider the $j^{th}$-smallest $i$ such that $t_{i} \in V(H)$ and $t_{i}$ is an odd vertex. When this query is issued, the probability that it reveals \revnew{an odd} vertex from $H$ is \revnew{$(\ell/2 - (j-1)) / (n-k_{i})$}, where $k_i$ is the number of assigned labels before this query.  By repetitively applying \revnew{the definition of conditional probability to the event that the $j^{th}$ odd vertex of $H$ is discovered conditioned on the $1^{st}$ to the $(j-1)^{th}$ odd vertex being discovered, we can rearrange the probability that all odd vertices of $H$ are revealed to obtain a product of the former probabilities}. It follows that the probability that $S$ contains all $\ell/2$ odd vertices of $H$ is at most
	\begin{equation*}
		\revnew{
		\binom{s}{\ell/2} \cdot \prod_{j=1}^{\ell/2} \frac{\ell/2 - (j-1)}{n - 3s}
		\leq \left( \frac{se}{\ell/2} \cdot \frac{\ell/2}{n - 3s} \right)^{\ell/2}
		\stackrel{(\star)}{\leq} \left( \frac{6s}{n} \right)^{\ell/2-1}.
		}
	\end{equation*}
	\revnew{Since, for all $i \in [s]$, we have $k_i \leq 3s$, the subtrahend $3s$ accounts for the fact that each query may reveal at most $3$ vertices to $\mathcal{T}$. Inequality~$(\star)$ follows from the assumption $s \leq \nicefrac{n}{6}$.}
	
	For any $i \in [s]$, let $E_i$ be the event that \revnew{$S$} contains all odd vertices of the alternating-cycle gadget that $t_i$ is part of. By the union bound over $(E_i)_{i \in [s]}$, $S$ contains all odd vertices of at least one alternating-cycle gadget with probability at most $s \cdot ( \frac{6s}{n} )^{\ell/2-1}$. We conclude that
	\begin{align*}
		&& s \cdot \left( \frac{6s}{n} \right)^{\ell/2-1}
			&< \frac{1}{3} \\
		&\Leftrightarrow& s
			&< \left( \frac{1}{3} \left( \frac{n}{6} \right)^{\ell/2-1} \right)^{\frac{1}{\ell/2}} \\
		&\Leftarrow& s
			&< \left( \frac{n}{18} \right)^{\frac{\ell/2-1}{\ell/2}} \\
		&\Leftarrow& s
			&< \frac{1}{18} \cdot n^{1- 2 \epsilon d} .
	\end{align*}
	Assuming $s \in o(n^{1-2 \epsilon d})$ contradicts the assumption that $\mathcal{T}$ is a tester with error probability at most $\nicefrac{1}{3}$. \revb{\qed}
	}
\end{proof}
}%

\section{Open Questions}

A natural question that is related to \emph{finding} stable partitions is the following: Given a graph $G$ and a partition~$\Gamma$, is \emph{the partition~$\Gamma$} far from being stable in $G$ (instead of the graph being far from $\Gamma$-stable)? This can be generalized further: Property testing is a special case of \emph{local computation algorithms} (LCA), where one shall provide oracle access to a solution, given oracle access to the input. In property testing, the solution is a single bit (accept or reject). While it is beyond the scope of sublinear algorithms to actually compute a stable partition, one may seek to develop an LCA that gives oracle access to it.

Generalizing the results we obtained, one may seek to obtain sublinear algorithms for games with unbounded coalition size. Here, the main difficulty is to obtain insights into the local structure of very large, say, linear sized coalitions. Following a slightly different line of thought, one may consider other graphs models like the dense model, where (almost) all players relate to each other and one may ask how two players $i,j$ relate, or the general graph model, where vertices have arbitrary degrees. These models are quite different from the bounded-degree model, as well as from each other. For the dense graph model, a characterization using Szemerédi regular partitions is known \cite{AloCom09}, and it seems possible that coalition formation could be expressed in terms of regular partitions. Much less is known about the general model. So far, most research is still focused on elementary graph problems like counting the number of constant-size cliques \cite{EdeApp18}. It would be interesting to see whether stability properties are also constant-query testable in these two models. Another direction for further studies is property testers with two-sided error. These testers do not need to provide a witness against the property, but rather \emph{sufficient statistics} that a graph is far from a property with constant probability.

As mentioned in the introduction, there exist also plenty of other stability concepts like
\opt{conference}{%
  (strict) core stability,
}%
Pareto-optimality and popularity
that can operate on the same preference extension, which may be interesting to analyse in order to obtain a deeper understanding of locality mechanics in \fenhg s.
Here, the main difficulty is to circumvent the usually high computational complexity of the exact decision problems.

Finally, one may study other models of hedonic games, in particular with ordinal preferences (e.g., rankings over known edges~\cite{lan-rey-rot-sch-sch:c:hgopt}). This requires further modelling of the oracle access and considered distance measures.

\opt{jaamas}{
}

\opt{aamas}{
	\bibliographystyle{ACM-Reference-Format}  %
	\balance  %
	\bibliography{bibliography_clean}
}
\opt{arxiv}{
	\printbibliography
}
\opt{jaamas}{
	\bibliographystyle{spbasic}      %
	\bibliography{bibliography_clean}
}

\end{document}